
\documentclass[letterpaper, 10 pt, conference]{ieeeconf}  

\IEEEoverridecommandlockouts                              

\overrideIEEEmargins                                      



\usepackage{graphicx} 
\usepackage{epsfig} 
\usepackage{amsmath} 
\usepackage{amssymb,mathrsfs}  
\usepackage{theorem}
\newtheorem{definition}{Definition}
\newtheorem{remark}{Remark}

\usepackage[breaklinks]{hyperref}

\hypersetup{pdfborder={0 0 0},
            breaklinks=true,%
            bookmarks=true,%
            bookmarksnumbered=true,%
            colorlinks=true,%
            citecolor=red,%
            linkcolor=blue,%
}

 \newtheorem{theorem}{Theorem}[section]

\newtheorem{Lemma}[theorem]{Lemma}

 \newtheorem{problem}[theorem]{Problem}
\usepackage[vlined,ruled,linesnumbered]{algorithm2e} 
\def\1{\mathbf{1}}

\newcommand{\subscr}[2]{{#1}_{\textup{#2}}}

\newcommand{\abs}[1]{|{#1}|}

\usepackage[vlined,ruled,linesnumbered]{algorithm2e} 

\title{\LARGE \bf
Cooperative Evasion by Translating Targets with Variable Speeds
}

\author{Shivam Bajaj \and Eloy Garcia \and Shaunak D. Bopardikar
\thanks{Shivam Bajaj and Shaunak D. Bopardikar are with the Department of Electrical and Computer Engineering, Michigan State University. email:
     \texttt{bajajshi@msu.edu, shaunak@msu.edu}. Eloy Garcia is with the Wright-Patterson Air Force Research Laboratory, Dayton OH. email: \texttt{eloy.garcia.2@us.af.mil}}}

\begin{document}

\maketitle
\thispagestyle{empty}
\pagestyle{empty}

\begin{abstract}
We consider a problem of cooperative evasion between a single pursuer and multiple evaders in which the evaders are constrained to move in the positive $Y$ direction. The evaders are slower than the vehicle and can choose their speeds from a bounded interval. The pursuer aims to intercept all evaders in a given sequence by executing a Manhattan pursuit strategy of moving parallel to the $X$ axis, followed by moving parallel to the $Y$ axis. The aim of the evaders is to cooperatively pick their individual speeds so that the total time to intercept all evaders is maximized. We first obtain conditions under which evaders should cooperate in order to maximize the total time to intercept as opposed to each moving greedily to optimize its own intercept time. Then, we propose and analyze an algorithm that assigns evasive strategies to the evaders in two iterations as opposed to performing an exponential search over the choice of evader speeds. We also characterize a fundamental limit on the total time taken by the pursuer to capture all evaders when the number of evaders is large. Finally, we provide numerical comparisons against random sampling heuristics. 
\end{abstract}
\section{Introduction}
We consider a single pursuer multi-evader pursuit evasion problem in which the aim of the pursuer is to intercept all of the evaders in a fixed given sequence. The evaders are constrained to move along the positive $Y$ direction. The pursuer follows the Manhattan distance, i.e., moving parallel to the $X$-axis followed by moving parallel to the $Y$-axis. The aim of the evaders is to cooperatively maximize the total time to intercept all evaders. Such a set-up arises in riot control or border protection scenarios in which a ground or air vehicle would like to optimally visit mobile locations headed toward a boundary/asset, or in UAV monitoring of vehicles along a highway. This setup is also applicable in multiple robotic decoy deployment~\cite{ragesh2014analysis}.
\subsection{Related work}
Since the seminal work by Isaacs in \cite{isaacs1999differential}, much has been done in the field of pursuit evasion with a lot of focus on multi-agent pursuit evasion \cite{makkapati2019optimal,girard2015proportional,selvakumar2016evasion}. 
The case of a single pursuer and 2 evaders has been extensively analyzed \cite{fuchs2010cooperative}, \cite{zemskov1997construction}. Protector-Prey-Predator \cite{oyler2016pursuit} and Target-Attacker-Defender differential game \cite{garcia2014cooperative} are some examples of this scenario. With more than two evaders, the complexity of the problem grows exponentially with number of evaders. The problem of successive pursuit with cooperative multiple evaders is considered in \cite{chikrii1987pursuit, shevchenko2008guaranteed, belousov2010solving} and \cite{liu2013evasion}. Our problem differs from \cite{Scott2018OptimalES,krishnamoorthy2013optimal} as the pursuer follows a fixed strategy and the evaders are constrained to move in a fixed direction and can choose their individual speeds from a bounded interval to maximize the total intercept time. Thus, the evasive strategies are based on the range of evader speeds. 

\subsection{Contributions}
We consider an optimal evasion problem between a single pursuer and $n$ evaders. The pursuer moves with unit speed. The evaders are constrained to move in the positive $Y$ direction such that their speeds $v_i, i\in \{1, \dots, n\}$, lie in the interval $[\subscr{u}{min},\subscr{u}{max}]$ with $0<\subscr{u}{min}<\subscr{u}{max}<1$. The evaders need to choose their speeds in order to maximize the total intercept time. We first present a complete solution to the optimal evasion problem for $n \leq 2$. We then show, for general $n$, that the optimal choice of the speed for each evader is one of the extremes, i.e., $\subscr{u}{min}$ or $\subscr{u}{max}$. We further show that, by enforcing cooperation among evaders, they are able to maximize the total intercept time. In order to implement the cooperative strategies, it is important to determine the conditions under which cooperation is optimal. Such conditions are also provided in this paper. We present an algorithm which assigns the evasive strategies to the evaders in two iterations as opposed to performing an exponential search over the choice of evader speeds. For sufficiently large $n$, for which the global optimum is difficult to compute, we establish a fundamental upper bound to the total intercept time taken by the pursuer to capture all evaders. Finally, we provide comparisons through numerical results.
\subsection{Organization}

The paper is organized as follows. Section \ref{sec:Prob} comprises the formal problem definition. In section \ref{sec:Strategy}, we derive an evasive strategy for multiple evaders and provide a Sequential-Greedy-Cooperation algorithm. Section \ref{sec:fundamental} establishes a fundamental upper bound on the total time to intercept all evaders. Section \ref{sec:Sims} presents the numerical simulations. Finally, section \ref{sec:Conc} summarizes this paper and outlines directions for future work.  


\section{Problem formulation}\label{sec:Prob}

We consider an optimal evasion problem played between a single pursuer with simple motion and $n$ mobile evaders. We denote the pursuer as $P$ and evaders as $E_i$ with $i \in \{1, \dots, n\}$. The pursuer with initial location at $(X,Y)$ is assumed to be moving with unit speed either along the $X$ or the $Y$ axis. We term this pursuit strategy as \emph{Manhattan pursuit}, and is formally defined as follows.
\begin{definition}[Simple Manhattan pursuit] Given initial locations $(x_i,y_i)$ and $(X,Y)$ of an evader $E_i$ and the pursuer $P$ respectively, the pursuer 
\begin{enumerate}
\item moves with unit speed along the positive or negative $X$ direction until $X(t) = x_i$ and then,
\item moves with unit speed along positive or negative $Y$ axis to intercept the evader.
\end{enumerate}
\end{definition}
The evaders, initially located at $\{(x_1, y_1), \dots, (x_n, y_n)\}$, are constrained to move along the positive $Y$ direction with simple motion such that their instantaneous speeds $v_i$, $i \in \{1, \dots, n\}$, lie in the interval $[\subscr{u}{min}, \subscr{u}{max}]$ with $0<\subscr{u}{min}<\subscr{u}{max}<1$ (Fig.  \ref{fig:problem_setup}). The pursuer is said to $\emph{intercept}$ the $i^{th}$ target when its location coincides with that of the $i^{th}$ target. The game $\emph{terminates}$ when the pursuer intercepts the last evader. A \emph{strategy} for an evader $E_i$ is a measurable function, defined as $v_i(\{X(t),Y(t)\}, \{x_i(t),y_i(t)\}_{i=1}^n)\to [\subscr{u}{min},\subscr{u}{max}]$, where the notation $\{X(t), Y(t)\}$ denotes the set of all locations $(X(\tau), Y(\tau)), \forall \tau \in [0,t]$. 
\begin{figure}[t]
    \centering
    \includegraphics[scale=0.45]{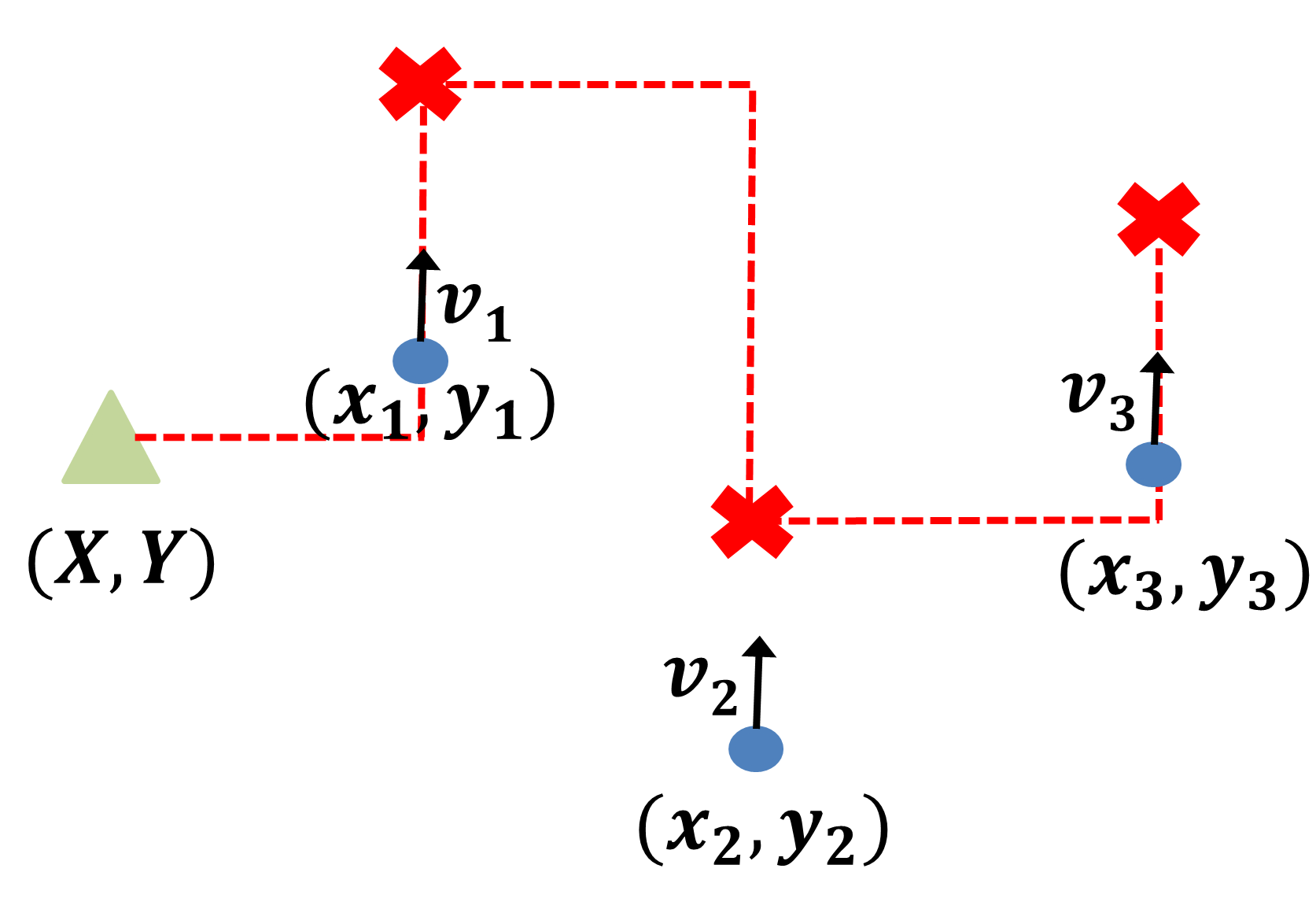}
    \caption{\small{Problem setup. The triangle represents the pursuer and the blue dots represent the evaders. The red dashed line represents the path taken by the pursuer to intercept the evaders. The cross represents the intercept locations.}}
    \label{fig:problem_setup}
\end{figure}
 The goal is to solve the following problem.
\begin{problem}[Optimal evasion] \label{prob:fixedorder} Given that the pursuer follows a fixed order to intercept the evaders, determine strategies $v_1^*, v_2^*, \dots, v_n^*$ for the evaders that maximizes the total time $T_n$ taken by the pursuer to intercept all $n$ evaders. \end{problem}

\section{Evasive Strategy}\label{sec:Strategy}
We begin with the case of a single evader followed by two evaders, and then present the more general case. We start by defining the following simple Manhattan pursuit strategy.

\subsection{Single Evader}
In this section, we first consider the case of a single evader and a pursuer, located at $(x_1,y_1)$ and $(X,Y)$ respectively. We first present a result on the time taken to intercept a single evader. This will be used in deriving the optimal strategy for the evader. We denote $\Delta x_{i}^{i-1}:= \abs{x_{i-1}-x_i}$, where $x_0=X$ and $T_{i}^{i-1}(v_i)$ as the time taken by the pursuer to intercept evader $i$ moving with speed $v_i$ after intercepting evader $i-1$. Specifically, $T_{1}^{0}(v)$ is the time taken to intercept the first evader moving with speed $v$. For brevity, we omit the proofs for the single evader case as they can be derived by following the steps in the proof of the general case presented later.

\begin{Lemma}[Time to intercept a single evader]\label{lem:intercepttimeE1}
The time $T_{1}^0$ taken by $P$ to intercept $E_1$ is
\begin{align*}
  T_{1}^0(v)= \begin{cases}
            \frac{\Delta x_{1}^0+ y_1-Y}{1-v}, \text{ if } \Delta x_{1}^0 >  (Y-y_1)/v,\\
            \frac{\Delta x_{1}^0+ Y-y_1}{1+v},\text{ otherwise.}
    \end{cases}
\end{align*}
\end{Lemma}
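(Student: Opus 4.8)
The plan is to track the pursuer's two-phase Manhattan motion explicitly and reduce the interception to a one-dimensional chase along the vertical line $X = x_1$. First I would compute the duration of the horizontal phase: since the pursuer moves with unit speed along the $X$-axis from $X$ to $x_1$, this phase lasts exactly $\Delta x_1^0 = |X - x_1|$ time units, during which the pursuer's $Y$-coordinate is unchanged while the evader climbs from $y_1$ to $y_1 + v\,\Delta x_1^0$. Thus at the end of the horizontal phase the pursuer sits at $(x_1, Y)$ and the evader at $(x_1, y_1 + v\,\Delta x_1^0)$; crucially, both now share the $x$-coordinate $x_1$, so the remaining pursuit is purely vertical.

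Next I would split into two cases according to whether the evader is above or below the pursuer at the start of the second phase. If $Y < y_1 + v\,\Delta x_1^0$, equivalently $\Delta x_1^0 > (Y - y_1)/v$, the evader is above the pursuer, which must move in the positive $Y$ direction; both climb, the pursuer faster, so the gap closes at relative speed $1 - v$, giving a chase time of $(y_1 + v\,\Delta x_1^0 - Y)/(1 - v)$. Adding the horizontal phase and simplifying collapses the $v\,\Delta x_1^0$ terms and yields $T_1^0(v) = (\Delta x_1^0 + y_1 - Y)/(1-v)$, the first branch. In the complementary case $\Delta x_1^0 \le (Y-y_1)/v$, the evader is at or below the pursuer, which moves in the negative $Y$ direction; the two then approach at relative speed $1 + v$, producing chase time $(Y - y_1 - v\,\Delta x_1^0)/(1+v)$, and the same addition-and-simplification gives $(\Delta x_1^0 + Y - y_1)/(1+v)$, the second branch.

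There is no genuinely hard step here --- the argument is an elementary kinematics computation --- so the only care needed is bookkeeping: confirming that the evader's head start $v\,\Delta x_1^0$ accumulated during the horizontal leg cancels cleanly after multiplying through by $1 \mp v$, and that the case boundary is handled consistently. I would note that at $\Delta x_1^0 = (Y-y_1)/v$ both expressions reduce to $(Y-y_1)/v$, so $T_1^0$ is continuous across the threshold and placing the equality in the ``otherwise'' branch is harmless. Since this same two-phase decomposition, with the evader's climb over the horizontal leg supplying the head start, is exactly what will be iterated to obtain $T_i^{i-1}(v_i)$ in the general case, I would set up the computation in a form that generalizes directly.
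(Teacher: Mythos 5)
Your proof is correct and follows essentially the same route as the paper, which omits the single-evader proof precisely because it is the two-phase (horizontal leg of duration $\Delta x_1^0$, then a vertical chase at relative speed $1\mp v$) argument used for Lemma \ref{lem:intercepttimeE2} and Lemma \ref{lem:intercepttime_En_MD}. Your bookkeeping (cancellation of the $v\,\Delta x_1^0$ head start) and the continuity check at the case boundary are accurate, and your closing remark that the computation generalizes to $T_i^{i-1}$ matches exactly how the paper treats it.
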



\begin{Lemma}[Monotonicity of time to intercept]\label{lem:mon_1_E}
The time $T_{1}^0(v)$ is a monotonically increasing function of $v$ if $\Delta x^0_1>\frac{Y-y_1}{v}$. Otherwise, $T_{1}^0(v)$ is a monotonically decreasing function of $v$.
\end{Lemma}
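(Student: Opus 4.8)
The plan is to work directly from the closed-form expression for $T_1^0(v)$ supplied by Lemma~\ref{lem:intercepttimeE1} and to differentiate each branch with respect to $v$, reading off the sign of the derivative from the branch-defining inequality. Since $T_1^0$ is in each case a ratio of a $v$-independent numerator to a simple affine function of $v$, each derivative is elementary; the real content is to verify that the numerator carries the correct sign in its respective regime.

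First I would treat the branch where $\Delta x_1^0 > (Y-y_1)/v$. Here $T_1^0(v) = (\Delta x_1^0 + y_1 - Y)/(1-v)$, so differentiating gives $\frac{d}{dv}T_1^0(v) = (\Delta x_1^0 + y_1 - Y)/(1-v)^2$, whose sign is that of the numerator $\Delta x_1^0 + y_1 - Y$. To pin this down I would use the branch condition together with $0<v<1$: since $1/v > 1$, the assumed inequality $\Delta x_1^0 > (Y-y_1)/v$ yields $\Delta x_1^0 > Y - y_1$ when $Y - y_1 > 0$, while if $Y - y_1 \le 0$ the numerator is a sum of the nonnegative quantities $\Delta x_1^0$ and $-(Y-y_1)$. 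In both subcases $\Delta x_1^0 + y_1 - Y > 0$ (excluding the degenerate configuration in which the pursuer already coincides with the evader), so the derivative is strictly positive and $T_1^0$ is increasing.

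Next I would handle the complementary branch $\Delta x_1^0 \le (Y-y_1)/v$, where $T_1^0(v) = (\Delta x_1^0 + Y - y_1)/(1+v)$ and the derivative equals $-(\Delta x_1^0 + Y - y_1)/(1+v)^2$. The branch condition forces $(Y-y_1)/v \ge \Delta x_1^0 \ge 0$ and hence, since $v>0$, $Y - y_1 \ge 0$; therefore $\Delta x_1^0 + Y - y_1 > 0$, the derivative is strictly negative, and $T_1^0$ is decreasing.

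A point I would flag explicitly is that the branch condition itself depends on $v$: rearranging $\Delta x_1^0 > (Y-y_1)/v$ (for $Y>y_1$ and $\Delta x_1^0>0$) gives the threshold $v^\star = (Y-y_1)/\Delta x_1^0$, so that $T_1^0$ is decreasing for $v<v^\star$ and increasing for $v>v^\star$. The monotonicity claim is thus to be read branchwise, and the computation above establishes the stated sign of $dT_1^0/dv$ throughout each branch. The only real obstacle is this bookkeeping on the sign of $Y-y_1$ together with the exclusion of the degenerate case; once those are settled, the derivative signs are immediate.
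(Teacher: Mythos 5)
Your proof is correct and takes essentially the same route as the paper: the paper omits the single-evader proof, but its template (the one-line proof of Lemma~\ref{lem:mon_2_E}) is exactly this branchwise differentiation of the closed form from Lemma~\ref{lem:intercepttimeE1}, reading the sign of $dT_1^0/dv$ off each branch. Your only addition is the explicit verification that the numerators $\Delta x_1^0 + y_1 - Y$ and $\Delta x_1^0 + Y - y_1$ are positive under their respective branch conditions (and the observation that the branch boundary is the threshold speed $v^\star = (Y-y_1)/\Delta x_1^0$), details the paper leaves implicit.
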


\begin{remark}
The time to intercept is monotonic even when the pursuer follows a Euclidean strategy, i.e., given the initial locations of $E_1$ and $P$, as $(x_1,y_1)$ and $(X,Y)$ respectively, the vehicle moves towards $(x_1,y_1+vT_1^0)$, where $T_1^0$ is
\begin{align*}
    \frac{(y_1-Y)v}{1-v^2}+\sqrt{\frac{(X-x_1)^2}{1-v^2}+\frac{(Y-y_1)^2}{(1-v^2)^2}}.
\end{align*}
\end{remark}
\medskip

Lemma \ref{lem:mon_1_E} characterizes the monotonic nature of $T_{1}^0(v)$. This only means that the maximum is achieved at one of the extremes. The next theorem characterizes the evader's optimal choice of speed.

\begin{theorem}[Single evader optimal strategy]\label{lem:opt_MD_1}
Given the initial locations $(x_1,y_1)$ and $(X,Y)$ of the evader and the pursuer respectively, 
the optimal strategy $v^*$ for the evader is
\begin{align*}
v^* =\begin{cases}
     \subscr{u}{min}, \text{ if }y_1 < Y - \Delta x_{1}^0 (\frac{\subscr{u}{min}+\subscr{u}{max}}{2+\subscr{u}{min}-\subscr{u}{max}})\\
     \subscr{u}{max}, \text{ otherwise. }
\end{cases}
\end{align*}
\end{theorem}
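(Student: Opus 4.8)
The plan is to exploit the monotonicity in Lemma~\ref{lem:mon_1_E} to reduce the maximization to a comparison of the two endpoint values $T_{1}^0(\subscr{u}{min})$ and $T_{1}^0(\subscr{u}{max})$, and then to pin down the threshold at which this comparison flips. First I would read the regime boundary of Lemma~\ref{lem:intercepttimeE1} as a critical speed $\bar v := (Y-y_1)/\Delta x_{1}^0$: the first case holds precisely when $v>\bar v$ (where $T_1^0$ is increasing) and the ``otherwise'' case when $v<\bar v$ (where $T_1^0$ is decreasing). Checking that the two expressions agree at $v=\bar v$ (both evaluate to $\Delta x_{1}^0$) shows $T_1^0$ is continuous and has a single minimum at $\bar v$, decreasing before it and increasing after. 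Hence the maximum over $[\subscr{u}{min},\subscr{u}{max}]$ is always attained at an endpoint, which is the step that justifies the claim that $v^*\in\{\subscr{u}{min},\subscr{u}{max}\}$.

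Next I would split on the location of $\bar v$ relative to the interval. If $\bar v \ge \subscr{u}{max}$ the whole interval is in the decreasing regime and the maximum is at $\subscr{u}{min}$; if $\bar v \le \subscr{u}{min}$ the interval is increasing and the maximum is at $\subscr{u}{max}$ (this case also absorbs $y_1 \ge Y$, where $\bar v \le 0$). The decisive case is $\subscr{u}{min} < \bar v < \subscr{u}{max}$, in which $\subscr{u}{min}$ lies in the ``otherwise'' branch and $\subscr{u}{max}$ in the first branch, so
\[
T_1^0(\subscr{u}{min}) = \frac{\Delta x_{1}^0 + Y - y_1}{1+\subscr{u}{min}}, \qquad T_1^0(\subscr{u}{max}) = \frac{\Delta x_{1}^0 + y_1 - Y}{1-\subscr{u}{max}}.
\]
Imposing $T_1^0(\subscr{u}{min}) > T_1^0(\subscr{u}{max})$, cross-multiplying by the positive denominators, and collecting the terms in $(Y-y_1)$ reduces the inequality to $Y - y_1 > \Delta x_{1}^0 \cdot \frac{\subscr{u}{min}+\subscr{u}{max}}{2+\subscr{u}{min}-\subscr{u}{max}}$, i.e.\ exactly the stated condition selecting $v^* = \subscr{u}{min}$; the reverse inequality yields $v^* = \subscr{u}{max}$.

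Finally I would reconcile the three cases by verifying that the threshold fraction $\frac{\subscr{u}{min}+\subscr{u}{max}}{2+\subscr{u}{min}-\subscr{u}{max}}$ lies strictly between $\subscr{u}{min}$ and $\subscr{u}{max}$. A short computation gives $\frac{\subscr{u}{min}+\subscr{u}{max}}{2+\subscr{u}{min}-\subscr{u}{max}} - \subscr{u}{min} = \frac{(\subscr{u}{max}-\subscr{u}{min})(1+\subscr{u}{min})}{2+\subscr{u}{min}-\subscr{u}{max}}$ and $\subscr{u}{max} - \frac{\subscr{u}{min}+\subscr{u}{max}}{2+\subscr{u}{min}-\subscr{u}{max}} = \frac{(1-\subscr{u}{max})(\subscr{u}{max}-\subscr{u}{min})}{2+\subscr{u}{min}-\subscr{u}{max}}$, both positive because $0<\subscr{u}{min}<\subscr{u}{max}<1$. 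Consequently the value of $Y-y_1$ at which the endpoint comparison flips falls inside $[\Delta x_{1}^0\,\subscr{u}{min}, \Delta x_{1}^0\,\subscr{u}{max}]$, precisely the range where the decisive middle case applies, so the two boundary cases agree with the single threshold formula.

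The main obstacle I anticipate is exactly this bookkeeping: because the active branch of $T_1^0$ depends on $v$ itself, one must rule out the possibility that the crossover of the two endpoint values occurs while both endpoints share a single branch, which would be inconsistent with a single threshold. The factorization showing the threshold is strictly interior to $(\subscr{u}{min},\subscr{u}{max})$ is what forecloses this and stitches the three cases into one clean condition.
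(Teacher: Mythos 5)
Your proposal is correct, and the computations check out: the branch condition of Lemma~\ref{lem:intercepttimeE1} is equivalent to $v\gtrless \bar v=(Y-y_1)/\Delta x_{1}^0$, both branch formulas equal $\Delta x_{1}^0$ at $v=\bar v$ (so $T_1^0$ is continuous and unimodal with its minimum at $\bar v$), the cross-multiplied middle-case inequality reduces exactly to $Y-y_1>\Delta x_{1}^0 V$ with $V:=\frac{\subscr{u}{min}+\subscr{u}{max}}{2+\subscr{u}{min}-\subscr{u}{max}}$, and both factorizations establishing $\subscr{u}{min}<V<\subscr{u}{max}$ are right. You share the paper's backbone (reduce to endpoints via Lemma~\ref{lem:mon_1_E}, then locate where $T_1^0(\subscr{u}{min})$ and $T_1^0(\subscr{u}{max})$ cross), but your execution is genuinely different and, in fact, more complete. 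The paper solves $T_1^0(\subscr{u}{min})=T_1^0(\subscr{u}{max})$ for a critical ordinate $y'$ and then runs a brief contradiction argument; this tacitly assumes two things it never justifies in this proof: that at the crossover one may legitimately evaluate $\subscr{u}{max}$ on the increasing branch and $\subscr{u}{min}$ on the decreasing branch (the paper only states the analogous caveat later, in Lemma~\ref{lem:greed_2_MD}, where it notes the equality ``is possible only if'' the branch condition holds for $\subscr{u}{max}$ and fails for $\subscr{u}{min}$), and that the sign of the endpoint comparison varies monotonically with $y_1$. Your interiority check $\subscr{u}{min}<V<\subscr{u}{max}$ is precisely the missing justification for the first assumption, and your explicit three-way split on the position of $\bar v$ relative to $[\subscr{u}{min},\subscr{u}{max}]$ replaces the second with a direct argument, confirming that the two boundary regimes ($\bar v\geq\subscr{u}{max}$ and $\bar v\leq\subscr{u}{min}$) are consistent with the single threshold formula. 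The only blemish is the degenerate case $\Delta x_{1}^0=0$, where $\bar v$ is undefined; there $T_1^0$ equals $(Y-y_1)/(1+v)$ or $(y_1-Y)/(1-v)$ and the stated rule follows immediately, so nothing breaks, but a sentence acknowledging it would make the argument airtight.
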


\begin{proof}
We provide only an outline. We find a location $(x_1,y')$ such that $T_{1}^0(\subscr{u}{min})=T_{1}^0(\subscr{u}{max})$, where $T_{1}^0(\subscr{u}{min})$ (resp. $T_{1}^0(\subscr{u}{max})$) is the time to intercept when evader moves with $\subscr{u}{min}$ (resp. $\subscr{u}{max}$). From Lemma \ref{lem:intercepttimeE1}, $T_{1}^0(\subscr{u}{min})=T_{1}^0(\subscr{u}{max})\Rightarrow y'=Y-\Delta x_{1}^0\big(\frac{\subscr{u}{min}+\subscr{u}{max}}{2+\subscr{u}{min}-\subscr{u}{max}}\big)$.
This means that if $y_1=y'$, then from Lemma \ref{lem:mon_1_E}, it would not matter if the evader moves with $\subscr{u}{min}$ or $\subscr{u}{min}$ as $T_{1}^0(\subscr{u}{min})=T_{1}^0(\subscr{u}{max})$, and $T_1^0$ will be maximum at both $\subscr{u}{min}$ and $\subscr{u}{max}$. If $y_1<y'$, then, from Lemma \ref{lem:mon_1_E}, either $T_{1}^0(\subscr{u}{min})<T_{1}^0(\subscr{u}{max})$ or $T_{1}^0(\subscr{u}{min})>T_{1}^0(\subscr{u}{max})$. Assuming $T_{1}^0(\subscr{u}{min})<T_{1}^0(\subscr{u}{max}) \Rightarrow y_1>y'$ and thus, by contradiction, we get the result. The second case is analogous and we get the result.
\end{proof}
\medskip

We now consider the case of two evaders and derive the optimal evasion strategies for both evaders. We say that an evader $E_i$ moves \emph{greedy} if it moves with speed that maximizes its own intercept time. An evader \emph{cooperates} if it moves with a speed that maximizes the total intercept time. We denote the greedy strategy of evader $i$ as $v_{i_g}^*$ and the cooperative strategy as $v_{i_c}^*$.
\subsection{Two evaders}
Similar to previous section, we first derive an expression for the time taken to intercept the evaders followed by the optimal strategy for both evaders.

Let the first evader $E_1$ be located at $(x_1,y_1)$ and move with speed $v_1$ and the second evader $E_2$ be located at $(x_2,y_2)$ and move with speed $v_2$. Then, the following result summarizes the time to intercept $E_2$ after intercepting $E_1$. For ease of reference, we introduce the following condition:
\begin{equation}\label{eq:2evader-condition}
 \Delta x_{2}^1 > \frac{y_1-y_2 + (v_1-v_2)T_{1}^0(v_1)}{v_2}.
 \end{equation}

\begin{Lemma}[Time to intercept $E_2$]\label{lem:intercepttimeE2}
The time $T_{2}^1(v_1,v_2)$ taken by $P$ to intercept $E_2$ after intercepting $E_1$ is
\begin{align*}
    T_{2}^1(v_1,v_2) =
    \begin{cases}
        \frac{\Delta x_{2}^1 + y_2-y_1 + (v_2-v_1)T_{1}^0(v_1)}{1-v_2},\text{ if \eqref{eq:2evader-condition} holds,}\\
        \frac{\Delta x_{2}^1 + y_1-y_2 + (v_1-v_2)T_{1}^0(v_1)}{1+v_2},\text{ otherwise.}
    \end{cases}
\end{align*}
\end{Lemma}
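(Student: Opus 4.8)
The plan is to reduce the two-evader interception to a single application of Lemma~\ref{lem:intercepttimeE1}. The key observation is that once the pursuer has intercepted $E_1$ at time $T_{1}^0(v_1)$, the remaining task of catching $E_2$ is itself a fresh simple Manhattan pursuit, only now launched from whatever configuration the two agents have reached at time $T_{1}^0(v_1)$. So I would first freeze the picture at the instant of the first interception, read off the new initial data, and then feed that data directly into Lemma~\ref{lem:intercepttimeE1}.

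Concretely, the first step is to locate the pursuer immediately after it catches $E_1$. Since $E_1$ starts at $(x_1,y_1)$ and translates in the positive $Y$ direction at speed $v_1$, its position at time $T_{1}^0(v_1)$ is $(x_1,\, y_1+v_1 T_{1}^0(v_1))$; because interception means the pursuer's position coincides with $E_1$'s, the pursuer also sits at $(x_1,\, y_1+v_1 T_{1}^0(v_1))$ at that instant. The second step is to locate $E_2$ at the same instant: having moved upward at speed $v_2$ for a duration $T_{1}^0(v_1)$, it is at $(x_2,\, y_2+v_2 T_{1}^0(v_1))$. The third step is to invoke Lemma~\ref{lem:intercepttimeE1} under the substitutions $X\mapsto x_1$, $Y\mapsto y_1+v_1 T_{1}^0(v_1)$, evader start $y_1\mapsto y_2+v_2 T_{1}^0(v_1)$, horizontal gap $\Delta x_{1}^0\mapsto\Delta x_{2}^1$, and speed $v\mapsto v_2$. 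The fourth step is pure bookkeeping: substituting these into the two branches of Lemma~\ref{lem:intercepttimeE1} and collecting the $T_{1}^0(v_1)$ terms yields the numerators $\Delta x_{2}^1+y_2-y_1+(v_2-v_1)T_{1}^0(v_1)$ and $\Delta x_{2}^1+y_1-y_2+(v_1-v_2)T_{1}^0(v_1)$, while the branch condition $\Delta x_{2}^1 > \big((y_1+v_1 T_{1}^0(v_1))-(y_2+v_2 T_{1}^0(v_1))\big)/v_2$ collapses exactly to~\eqref{eq:2evader-condition}.

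The step I expect to be most delicate is justifying that the pursuer's post-interception position is $(x_1,\,y_1+v_1 T_{1}^0(v_1))$ \emph{uniformly}, that is, independently of which branch of Lemma~\ref{lem:intercepttimeE1} governed the first interception. Whether the pursuer finished by moving up or down along $Y$, it ends exactly where $E_1$ is, and that point is pinned down by $E_1$'s own motion to be $(x_1,\,y_1+v_1 T_{1}^0(v_1))$; this is precisely what permits the reduction to proceed through a single clean application of the single-evader lemma rather than a four-way case split over both evaders. The only remaining care is to remember that $E_2$ does not stand still during the first chase, so its effective starting height for the second phase carries the offset $v_2 T_{1}^0(v_1)$. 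Once this offset is tracked, the algebra is routine and the stated expressions follow.
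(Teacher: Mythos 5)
Your proposal is correct and follows essentially the same route as the paper: the paper's proof likewise freezes the configuration at the first interception (pursuer at $(x_1,\,y_1+v_1T_{1}^0(v_1))$, $E_2$ advanced by $v_2T_{1}^0(v_1)$) and computes the second Manhattan leg, merely writing out the stage-2 algebra directly instead of formally invoking Lemma~\ref{lem:intercepttimeE1} under your substitutions. Your packaging of that computation as a restart of the single-evader lemma, including the observation that the post-interception pursuer position is branch-independent, is a clean and equivalent presentation of the same argument.
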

\begin{proof}
Consider the case when condition \eqref{eq:2evader-condition} holds, after the intercept of $E_1$. This means that after the completion of stage (1) for the pursuit of $E_2$, the evader's $Y$-coordinate strictly exceeds $Y=y_1+v_1T_{1}^0(v_1)$. The additional time to intercept the second evader is
\[\frac{y_2+v_2T_{1}^0(v_1)-y_1-v_1T_{1}^0(v_1)+v_2\Delta x_{2}^1}{1-v_2}.\]
Thus, total time to intercept $E_2$ after intercepting $E_1$ is
\begin{align*}
&T_{2}^1(v_1,v_2)
=\frac{\Delta x_{2}^1 + y_2-y_1 + (v_2-v_1)T_{1}^0(v_1)}{1-v_2}.
\end{align*}
The second case can be derived analogously and this concludes the proof.
\end{proof}

\begin{Lemma}[Monotonicity of time to intercept $E_2$]\label{lem:mon_2_E}
Given that $E_1$ moves with $v_1$, the time $T_{2}^1(v_1,v_2)$ is monotonically increasing function of $v_2$ if condition \eqref{eq:2evader-condition} holds. Otherwise, $T_{2}^1(v_1,v_2)$ is a monotonically decreasing function of $v_2$.
\end{Lemma}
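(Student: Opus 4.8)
The plan is to treat $v_1$ as fixed throughout, so that $T_1^0(v_1)$ is a constant with respect to $v_2$, and to differentiate in $v_2$ the two closed-form branches of $T_2^1(v_1,v_2)$ supplied by Lemma \ref{lem:intercepttimeE2}. In each branch the function is a ratio of two affine functions of $v_2$, so the quotient rule produces a derivative whose denominator is a square (hence positive) and whose numerator is a \emph{constant} independent of $v_2$; the entire argument therefore reduces to pinning down the sign of that constant in each branch.

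Concretely, in the branch where \eqref{eq:2evader-condition} holds I would write the numerator of $T_2^1$ as $a + b v_2$ with $b = T_1^0(v_1) > 0$ and $a = \Delta x_{2}^1 + y_2 - y_1 - v_1 T_1^0(v_1)$, and the denominator as $1 - v_2$. A short computation gives $\partial T_2^1/\partial v_2 = (a+b)/(1-v_2)^2$, so I only need $a + b > 0$. In the complementary branch the numerator is $c - b v_2$ with $c = \Delta x_{2}^1 + y_1 - y_2 + v_1 T_1^0(v_1)$, over denominator $1 + v_2$, and the derivative is $-(b+c)/(1+v_2)^2$, so I need $b + c > 0$ to conclude strict decrease.

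The heart of the proof — and the step I expect to be the main obstacle — is establishing these two sign facts, since neither $a+b$ nor $b+c$ is manifestly positive ($y_2 - y_1$ may have either sign). The key observation is that the defining inequality \eqref{eq:2evader-condition} is exactly what controls the sign. Rewriting \eqref{eq:2evader-condition} as $v_2 \Delta x_{2}^1 > (y_1 + v_1 T_1^0(v_1)) - (y_2 + v_2 T_1^0(v_1))$ — i.e. comparing the pursuer's intercept height for $E_1$ with $E_2$'s height after the horizontal leg — and substituting into $a + b = \Delta x_{2}^1 + (y_2 - y_1) + (1 - v_1) T_1^0(v_1)$ yields the clean lower bound $a + b > (1 - v_2)\big(\Delta x_{2}^1 + T_1^0(v_1)\big) > 0$, using $v_2 < 1$. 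Symmetrically, in the other branch the reversed inequality $v_2 \Delta x_{2}^1 \le (y_1 + v_1 T_1^0(v_1)) - (y_2 + v_2 T_1^0(v_1))$ gives $b + c \ge (1 + v_2)\big(\Delta x_{2}^1 + T_1^0(v_1)\big) > 0$.

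With both constants shown positive, the sign of each derivative is fixed on its relevant region, so $T_2^1(v_1,\cdot)$ is strictly increasing when \eqref{eq:2evader-condition} holds and strictly decreasing otherwise, completing the proof. As a sanity check, specializing to the first evader (formally $v_1 = 0$, $T_1^0 \equiv 0$) should recover the single-evader pattern of Lemma \ref{lem:mon_1_E}, which requires no extra work.
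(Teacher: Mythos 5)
Your proposal is correct and takes essentially the same route as the paper, whose one-line proof simply differentiates the two closed-form branches of Lemma~\ref{lem:intercepttimeE2} and asserts the sign of $\frac{dT_{2}^1(v_1,v_2)}{dv_2}$ in each case. Your explicit sign verification --- deriving $a+b > (1-v_2)\big(\Delta x_{2}^1 + T_{1}^0(v_1)\big)$ and $b+c \ge (1+v_2)\big(\Delta x_{2}^1 + T_{1}^0(v_1)\big)$ directly from condition \eqref{eq:2evader-condition} and its negation --- is accurate and fills in the detail the paper leaves implicit.
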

\begin{proof}
From Lemma \ref{lem:intercepttimeE2}, $\frac{dT_{2}^1(v_1,v_2)}{dv_2}>0$ if condition \eqref{eq:2evader-condition} holds and $\frac{dT_{2}^1(v_1,v_2)}{dv_2}<0$, otherwise. 
\end{proof}
\medskip
We now characterize an optimal greedy strategy for $E_2$. In what follows, we denote $V:=\frac{\subscr{u}{min}+\subscr{u}{max}}{2+\subscr{u}{min}-\subscr{u}{max}}$.

\begin{Lemma}[$E_2$'s greedy strategy]\label{lem:greed_2_MD}
The greedy strategy $v_{2g}^*$ for $E_2$ for a greedy $E_1$ moving with $v_{1g}^*$ is
\begin{align*}
     v_{2g}^*=
     \begin{cases}
        \subscr{u}{max}, \text{ if }
        y_2\geq y_1-\Delta x_{2}^1V+(v_{1g}^*-V)T_{1}^0(v_{1g}^*),\\
        \subscr{u}{min},\text{ otherwise.}
     \end{cases}
\end{align*}
\end{Lemma}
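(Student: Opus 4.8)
The plan is to reduce the optimization over the continuum $[\subscr{u}{min},\subscr{u}{max}]$ to a comparison of the two endpoint speeds, exactly as in the single-evader argument. By Lemma~\ref{lem:mon_2_E}, for a fixed greedy $v_{1g}^*$ the map $v_2 \mapsto T_{2}^1(v_{1g}^*,v_2)$ is monotone on each of the two regimes delineated by condition~\eqref{eq:2evader-condition}, so its maximum over the closed interval is attained at an extreme. Hence $v_{2g}^* \in \{\subscr{u}{min}, \subscr{u}{max}\}$, and characterizing the greedy strategy reduces to determining the sign of $T_{2}^1(v_{1g}^*,\subscr{u}{max}) - T_{2}^1(v_{1g}^*,\subscr{u}{min})$ as a function of $y_2$.

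To locate the switching height I would mimic the proof of Theorem~\ref{lem:opt_MD_1} and solve for the critical $y_2'$ at which the two endpoint times coincide. The convenient viewpoint is that, once $E_1$ is intercepted at time $\tau_1 := T_{1}^0(v_{1g}^*)$, the second stage is a single-evader pursuit whose effective pursuer ordinate is $y_1 + v_{1g}^*\tau_1$ and whose effective evader ordinate is $y_2 + v_2\tau_1$. Equating $T_{2}^1(v_{1g}^*,\subscr{u}{max}) = T_{2}^1(v_{1g}^*,\subscr{u}{min})$ via the two branches of Lemma~\ref{lem:intercepttimeE2}, I expect that at $y_2 = y_2'$ the fast choice $\subscr{u}{max}$ satisfies condition~\eqref{eq:2evader-condition} (first branch, denominator $1-\subscr{u}{max}$) while the slow choice $\subscr{u}{min}$ violates it (second branch, denominator $1+\subscr{u}{min}$); this branch assignment is justified by the inequalities $\subscr{u}{min} < V < \subscr{u}{max}$, which follow directly from $0<\subscr{u}{min}<\subscr{u}{max}<1$. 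Cross-multiplying the resulting equality and cancelling the common second-order terms in $\subscr{u}{min}\subscr{u}{max}\tau_1$ collapses the relation to a linear equation in $y_2$, whose solution is $y_2' = y_1 - \Delta x_{2}^1 V + (v_{1g}^* - V)\tau_1$, matching the stated threshold.

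It remains to promote the equality at $y_2'$ to the claimed inequality for every $y_2$. For this I would exploit the affine (hence monotone) dependence of each endpoint time on $y_2$ recorded in Lemma~\ref{lem:intercepttimeE2}: $T_{2}^1(v_{1g}^*,v_2)$ increases in $y_2$ when condition~\eqref{eq:2evader-condition} holds and decreases otherwise, and a short computation of the two branch slopes $\tfrac{1}{1-\subscr{u}{max}}>\tfrac{1}{1-\subscr{u}{min}}$ and $\tfrac{1}{1+\subscr{u}{min}}>\tfrac{1}{1+\subscr{u}{max}}$ shows that the difference $T_{2}^1(v_{1g}^*,\subscr{u}{max}) - T_{2}^1(v_{1g}^*,\subscr{u}{min})$ is increasing in $y_2$ across all regimes and therefore changes sign exactly at $y_2'$, giving $\subscr{u}{max}$ for $y_2 \ge y_2'$ and $\subscr{u}{min}$ otherwise (one may equivalently argue by contradiction as in Theorem~\ref{lem:opt_MD_1}). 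The main obstacle is precisely this bookkeeping: because $E_2$ keeps moving during the first stage, its effective starting ordinate carries the speed-dependent term $v_2\tau_1$, so the reduction to the single-evader computation is not verbatim, and one must verify that each extreme speed lands in the asserted branch of condition~\eqref{eq:2evader-condition} on the correct side of $y_2'$ — not merely at the threshold itself — which requires tracking the two auxiliary thresholds $y_2^{(\max)} < y_2' < y_2^{(\min)}$ induced by the two speeds.
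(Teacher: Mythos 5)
Your proposal is correct and follows essentially the same route as the paper's proof: Lemma~\ref{lem:mon_2_E} reduces the optimization to the endpoint speeds, the threshold $y_2'$ is found by equating the two branch expressions of Lemma~\ref{lem:intercepttimeE2} with the branch assignment justified by $\subscr{u}{min}<V<\subscr{u}{max}$, and the sign of the endpoint difference then dictates the greedy choice. Your monotone-in-$y_2$ argument for the difference $T_{2}^1(v_{1g}^*,\subscr{u}{max})-T_{2}^1(v_{1g}^*,\subscr{u}{min})$, tracking the auxiliary thresholds $y_2^{(\max)}<y_2'<y_2^{(\min)}$, is merely a more careful rendering of the paper's contradiction step, not a different method.
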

\begin{proof}
From Lemma \ref{lem:mon_2_E}, $T_{2}^1(v_{1g}^*,v_2)$ 
is maximized at either $v_2=\subscr{u}{min}$ or $v_2=\subscr{u}{max}$. The aim is to find the critical location $y_2'$ such that if $E_2$ was located at $(x_2,y_2')$, then the time $T_{2}^1(v_{1g}^*,\subscr{u}{max})$= $T_{2}^1(v_{1g}^*,\subscr{u}{min})$. Note that this is possible only if condition \eqref{eq:2evader-condition} holds for $v_2=\subscr{u}{max}$ and does not hold for $v_2=\subscr{u}{min}$. From Lemma \ref{lem:intercepttimeE2}, we get $y_2'=y_1-\Delta x_{2}^1V+
    (v_{1g}^*-V)T_{1}^0(v_{1g}^*)$.
This means that if $y_2=y_2'$, then irrespective of $E_2$'s choice of $\subscr{u}{min}$ or $\subscr{u}{max}$, the time to intercept $E_2$ will be the same and from Lemma \ref{lem:mon_2_E}, the time to intercept $E_2$ will be maximum at both $\subscr{u}{min}$ and $\subscr{u}{max}$, given that $E_1$ moves greedy.
Now, consider that the initial location of $E_2$  is such that $y_2<y_2'$. From Lemma \ref{lem:mon_2_E}, the time to intercept will be maximized only at either $\subscr{u}{min}$ or $\subscr{u}{max}$ and so assume that $T_{2}^1$ is maximized at $\subscr{u}{max}$, i.e., $T_{2}^1(v_{1g}^*,\subscr{u}{max})>T_{2}^1(v_{1g}^*,\subscr{u}{min})$. This implies $y_2>y_2'$.
This is a contradiction as $y_2<y_2'$. This means that $T_{2}^1$ will be maximized if $E_2$ moves at $\subscr{u}{min}$. Similarly, when $y_2\geq y_2'$ it can be shown that $T_{2}^1$ will  be maximized if $E_2$ moves at $\subscr{u}{max}$ and has been omitted for brevity.
In the case when condition \eqref{eq:2evader-condition} does not hold for $v_2=\subscr{u}{max}$ or holds for $v_2=\subscr{u}{min}$, then it implies that $y_2<y_2'$ and $y_2\geq y_2'$ respectively. This concludes the proof.
\end{proof}
\medskip

Lemma \ref{lem:greed_2_MD} yields a \emph{greedy} strategy for $E_2$ when $E_1$ and $E_2$ both move greedily. However, it might be better for the evaders to cooperate to maximize the total intercept time. We now characterize the conditions on cooperation between the two evaders.

\begin{figure}
    \centering
    \includegraphics[scale=0.3]{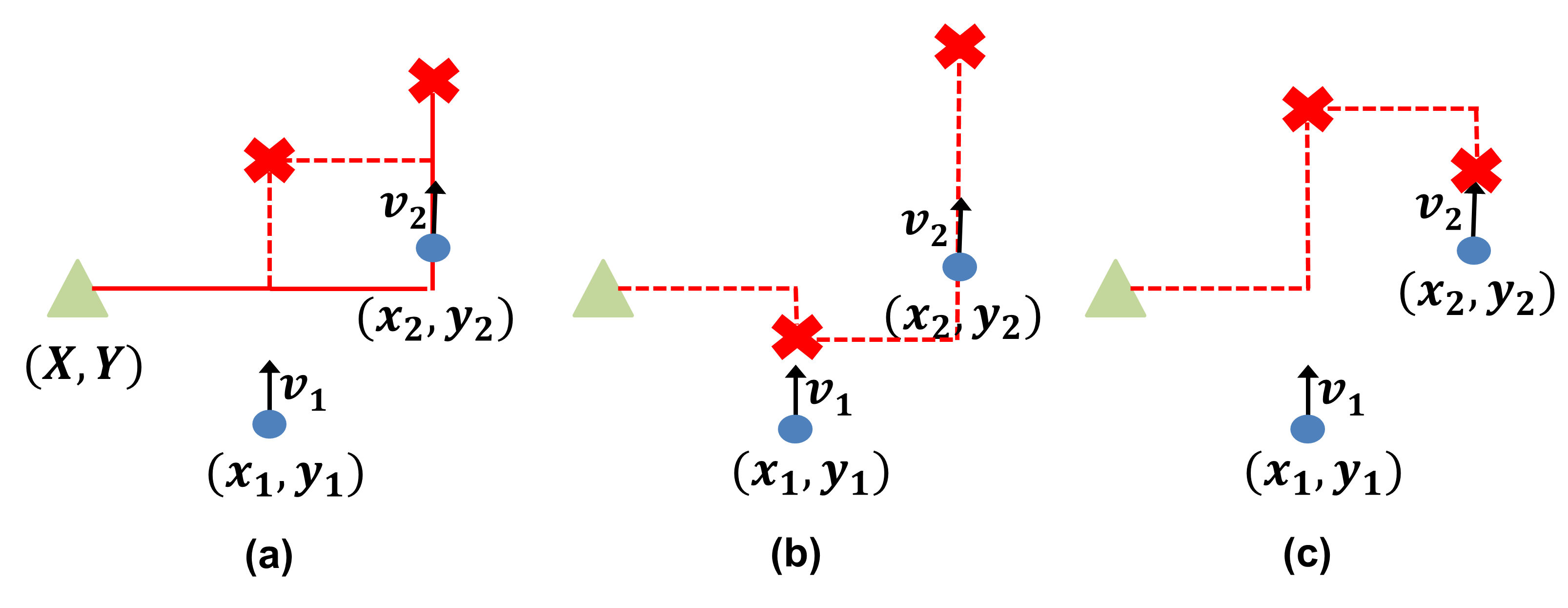}
    \caption{\small{Cases for cooperation. (a) The time (solid line) taken by the pursuer to intercept $E_2$ directly from $(X,Y)$ and time (dashed line) taken by the pursuer to intercept $E_1$ first then $E_2$ are the same. (b) $E_1$ and $E_2$ cooperate by moving at $\subscr{u}{min}$ and $\subscr{u}{max}$ respectively (c) $E_1$ and $E_2$ cooperate by moving at $\subscr{u}{max}$ and $\subscr{u}{min}$ respectively.}}
    \label{fig:MD_2_coop}
\end{figure}
We define that a point $A$, located at $(x_A,y_A)$, is \emph{above} point $B$, located at $(x_B,y_B)$, if $y_A>y_B$ and we define point $A$ is \emph{below} point $B$ if $y_A<y_B$.
\begin{Lemma}[Conditions on cooperation]\label{lem:coop_2E_MD}
Given the initial locations of $E_1$, $E_2$, and $P$ as $(x_1,y_1)$, $(x_2,y_2)$, and $(X,Y)$ respectively, $E_1$ cooperates with $E_2$ if 

\noindent {\bf (i) Case 1:}
\begin{align*}
            Y-\Delta x_{1}^0V\leq y_1\leq Y-\Delta x_{1}^0\subscr{u}{min}, \text{ and}
        \end{align*}
        \begin{equation}\label{eq:2_coop_MD}
        \begin{split}
            &y_2>y_1-\Delta x_{2}^1V+ \big(v_{1g}^*-V)T_{1}^0(v_{1g}^*) +\\ &2\big(\frac{\subscr{u}{min}\Delta x_{1}^0+y_1-Y}{2+\subscr{u}{min}-\subscr{u}{max}}\big),
            \end{split}
        \end{equation}
        
\noindent {\bf (ii) Case 2:}
\begin{align*}
            Y-\Delta x_{1}^0\subscr{u}{max}\leq y_1\leq Y-\Delta x_{1}^0V, \text{ and}
        \end{align*}
        \begin{equation}\label{eq:2_coop_MD_2}
        \begin{split}
            &y_2\leq y_1-\Delta x_{2}^1V+ (v_{1g}^*-V)T_{1}^0(v_{1g}^*) + \\
            &2\big(\frac{\subscr{u}{max}\Delta x_{1}^0+y_1-Y}{2+\subscr{u}{min}-\subscr{u}{max}}\big).
        \end{split}
        \end{equation}
\end{Lemma}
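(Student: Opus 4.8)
The plan is to reduce the cooperation question to a direct comparison of the total intercept time $T = T_{1}^0(v_1) + T_{2}^1(v_1,v_2)$ under two candidate speed profiles, and to solve the resulting inequality for $y_2$. First I would invoke Theorem~\ref{lem:opt_MD_1} to pin down $E_1$'s greedy speed: in Case~1 the hypothesis $y_1 \ge Y - \Delta x_{1}^0 V$ places $y_1$ above the single-evader threshold, so $v_{1g}^* = \subscr{u}{max}$, whereas the analogous hypothesis in Case~2 gives $v_{1g}^* = \subscr{u}{min}$. Since Lemma~\ref{lem:mon_2_E} together with the reduction to extreme speeds guarantees that every optimal speed is one of $\subscr{u}{min},\subscr{u}{max}$, ``$E_1$ cooperates'' means precisely that the total-maximizing choice for $E_1$ is the opposite extreme from $v_{1g}^*$; in Case~1 this is the profile $(v_1,v_2) = (\subscr{u}{min},\subscr{u}{max})$ of Fig.~\ref{fig:MD_2_coop}(b), and in Case~2 the profile $(\subscr{u}{max},\subscr{u}{min})$ of Fig.~\ref{fig:MD_2_coop}(c).

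The engine of the computation is a pair of collapsed expressions for the total time. Substituting $T_{2}^1$ from Lemma~\ref{lem:intercepttimeE2} into $T$, the $T_{1}^0$-dependence telescopes, giving, when \eqref{eq:2evader-condition} holds, $T = \frac{T_{1}^0(v_1)(1-v_1) + \Delta x_{2}^1 + y_2 - y_1}{1 - v_2}$, and otherwise $T = \frac{T_{1}^0(v_1)(1+v_1) + \Delta x_{2}^1 + y_1 - y_2}{1 + v_2}$. Focusing on Case~1, I would evaluate $T$ for the cooperative profile $(\subscr{u}{min},\subscr{u}{max})$ using the appropriate branch of Lemma~\ref{lem:intercepttimeE1} for each speed, noting that the hypothesis $Y - \Delta x_{1}^0 V \le y_1 \le Y - \Delta x_{1}^0 \subscr{u}{min}$ forces $T_{1}^0(\subscr{u}{max}) = \frac{\Delta x_{1}^0 + y_1 - Y}{1 - \subscr{u}{max}}$ into its increasing branch and $T_{1}^0(\subscr{u}{min}) = \frac{\Delta x_{1}^0 + Y - y_1}{1 + \subscr{u}{min}}$ into its decreasing branch. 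This cooperative total is then compared against the greedy total $T_{1}^0(\subscr{u}{max}) + \max_{v_2} T_{2}^1(\subscr{u}{max}, v_2)$, whose inner maximum is attained at $v_{2g}^*$ from Lemma~\ref{lem:greed_2_MD}.

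Because both collapsed expressions are affine in $y_2$ (increasing for the cooperative profile, decreasing for the relevant greedy profile), the inequality ``cooperative total $>$ greedy total'' is a single linear inequality in $y_2$ whose solution is a half-line $y_2 > y_2^{\mathrm{c}}$. Substituting the two branch values of $T_{1}^0$ above and clearing denominators should collapse $y_2^{\mathrm{c}}$ to the right-hand side of \eqref{eq:2_coop_MD}: namely $E_2$'s greedy threshold $y_1 - \Delta x_{2}^1 V + (v_{1g}^* - V)T_{1}^0(v_{1g}^*)$ shifted by $2\big(\frac{\subscr{u}{min}\Delta x_{1}^0 + y_1 - Y}{2 + \subscr{u}{min} - \subscr{u}{max}}\big)$, a quantity that is nonpositive in Case~1. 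Case~2 is handled by the mirror-image argument, exchanging the roles of $\subscr{u}{min}$ and $\subscr{u}{max}$ and the two branches, to produce \eqref{eq:2_coop_MD_2}.

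The main obstacle I anticipate is branch bookkeeping rather than algebra: switching $E_1$'s speed changes $T_{1}^0(v_1)$, which alters whether condition \eqref{eq:2evader-condition} holds, so the cooperative and greedy profiles may sit in different branches of $T_{2}^1$. I would need to verify that on the stated region the cooperative profile genuinely lies in the increasing branch (Case~1) / decreasing branch (Case~2), and to handle the fact that the greedy benchmark $\max_{v_2} T_{2}^1(\subscr{u}{max}, v_2)$ is itself piecewise in $y_2$: for $y_2$ above $E_2$'s greedy threshold the comparison becomes $y_2$-independent and must be shown to hold throughout the interior of the region, with equality only at $y_1 = Y - \Delta x_{1}^0 \subscr{u}{min}$, so that the single binding constraint is exactly \eqref{eq:2_coop_MD}.
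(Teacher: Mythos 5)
Your proposal is correct and follows essentially the same route as the paper's proof: the same reduction to extreme speeds, the same two deviation profiles $(\subscr{u}{min},\subscr{u}{max})$ and $(\subscr{u}{max},\subscr{u}{min})$ of Fig.~\ref{fig:MD_2_coop}(b)--(c), the same branch assignments for $T_1^0$ on the stated $y_1$-interval, and the identical comparison $T_{1}^0(\subscr{u}{min})+T_{2}^1(\subscr{u}{min},\subscr{u}{max}) > T_{1}^0(\subscr{u}{max})+T_{2}^1(\subscr{u}{max},\subscr{u}{min})$, linear in $y_2$, whose solution is exactly \eqref{eq:2_coop_MD}. Your explicit treatment of the piecewise greedy benchmark (the $y_2$-independent comparison when $y_2$ exceeds $E_2$'s greedy threshold, with equality only at $y_1 = Y-\Delta x_1^0\subscr{u}{min}$) is just a more careful restatement of the paper's observation that the both-greedy total collapses to the direct-chase time of $E_2$, so the two arguments coincide in substance.
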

\begin{proof}
Let the initial location of $E_1$ satisfy $y_1\geq Y-\Delta x_1^0V$ and initial location of $E_2$ satisfy $y_2 \geq y_1-\Delta x_{2}^1V+(v_{1g}^*-V)T_{1}^0(v_{1g}^*)$. Then, from Theorem \ref{lem:opt_MD_1}, to maximize its own intercept time, $T_{1}^0(v_{1g}^*)$, $E_1$ moves with $\subscr{u}{max}$. Also, from Lemma \ref{lem:greed_2_MD}, $E_2$ moves with $\subscr{u}{max}$ in order to maximize its own intercept time $T_{2}^1(v_{1g}^*,v_2)$. As the pursuer follows the Manhattan pursuit strategy, the time taken to cover the path in the y-direction to intercept $E_1$ and then $E_2$ is the same as the time taken to cover the path in y-direction to intercept only $E_2$ from the initial pursuer location $(X,Y)$ (see Fig.\ref{fig:MD_2_coop} (a)). Mathematically,
\begin{align*}
    T_{1}^0(\subscr{u}{max})+T_{2}^1(\subscr{u}{max},\subscr{u}{max})
    =\frac{\Delta x_{1}^0+\Delta x_{2}^1+y_2-Y}{1-\subscr{u}{max}}.
\end{align*}
Thus, in order to increase the total time to intercept, $E_1$ and $E_2$ need to cooperate. One way to cooperate is that $E_1$ moves with a speed $v_1$ such that $E_1$ is intercepted below the pursuer, i.e., $\Delta x_{1}^0<(Y-y_1)/v_1$ and $E_2$ moves greedily, i.e., with $\subscr{u}{max}$ (see Fig \ref{fig:MD_2_coop} (b)). This is only possible if 
\begin{align*}
    \Delta x_{1}^0<(Y-y_1)/\subscr{u}{min}
    &\Rightarrow y_1<Y-\Delta x_{1}^0\subscr{u}{min}.
\end{align*}

If $y_1>Y-\Delta x_{1}^0\subscr{u}{min}$, then for any speed $v_1 \in [\subscr{u}{min},\subscr{u}{max}]$ for $E_1$, $\Delta x_{1}^0>(Y-y_1)/v_1$, and so the total distance covered in the y-direction to intercept $E_1$ and $E_2$ will be the same as the total distance covered in the y-direction to intercept $E_2$, from the pursuer's initial location $(X,Y)$, irrespective of $E_1$'s choice. The only other way in which $E_1$ and $E_2$ can cooperate in order to increase the total intercept time is when $E_1$ moves greedily, i.e., with $\subscr{u}{max}$ and $E_2$ with speed $v_2$ such that $y_2 < y_1-\Delta x_{2}^1v_2+(v_{1g}^*-v_2)T_{1}^0(v_{1g}^*)$ such that intercept of $E_2$ is below the intercept location of $E_1$ (see Fig. \ref{fig:MD_2_coop} (c)). Thus, to determine which of the two scenarios yield the greater time to intercept, we arrive at a condition $T_{1}^0(\subscr{u}{min})+T_{2}^1(\subscr{u}{min},\subscr{u}{max}) > T_{1}^0(\subscr{u}{max}) + T_{2}^1(\subscr{u}{max},\subscr{u}{min})$.
\begin{multline*}
    \frac{\Delta x_{2}^1+y_2-y_1+(1-\subscr{u}{min})T_{1}^0(\subscr{u}{min})}{1-\subscr{u}{max}}>\\
    \frac{\Delta x_{2}^1+y_1-y_2+(1+\subscr{u}{max})T_{1}^0(\subscr{u}{max})}{1+\subscr{u}{min}}\\
    \Rightarrow y_2>y_1-\Delta x_{2}^1V + (v_{1g}^*-V)T_{1}^0(v_{1g}^*)+\\
    \frac{2(\subscr{u}{min}\Delta x_{1}^0 + y_1-Y)}{2+\subscr{u}{min}-\subscr{u}{max}}.
\end{multline*}
Note that as $T_{1}^0(v_1)$ and $T_{2}^1(v_1,v_2)$ are monotonic in $v_1$ and $v_2$ respectively, so the above condition is checked only at the extreme values, i.e., $\subscr{u}{min}$ and $\subscr{u}{max}$.
Thus, this means that $E_1$ should cooperate only if equation \eqref{eq:2_coop_MD} holds. Furthermore, if the initial location of $E_1$ was such that $y_1\leq Y-\Delta x_{1}^0V$, then from Lemma \ref{lem:opt_MD_1}, $E_1$ moves greedily, i.e., with speed $\subscr{u}{min}$. Since, this will already ensure that $\Delta x_{1}^0<(Y-y_1)/\subscr{u}{min}$ and so, there is no need for cooperation. Case 2 is analogous. This concludes our proof.
\end{proof}

\begin{theorem}[Optimal cooperative strategy for $E_1$]\label{thm:opt_strat_2}
Given the initial locations $(x_1,y_1)$, $(x_2,y_2)$, and $(X,Y)$ of $E_1$, $E_2$ and $P$, respectively, if the conditions for cooperation in Lemma \ref{lem:coop_2E_MD} hold, then the optimal cooperative strategy $v_{1c}^*$ for evader $E_1$ is
\[
v_{1c}^*=
\begin{cases}
             \subscr{u}{min}, \text{ for case 1 from Lemma \ref{lem:coop_2E_MD} or,}\\
       \subscr{u}{max},  
     \text{ for case 2 from Lemma \ref{lem:coop_2E_MD}.}
\end{cases}
\]
\end{theorem}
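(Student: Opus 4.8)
The plan is to show that, under the cooperation hypotheses of Lemma~\ref{lem:coop_2E_MD}, the pair $(v_1,v_2)$ maximizing the total intercept time $T_{1}^0(v_1)+T_{2}^1(v_1,v_2)$ has first coordinate $\subscr{u}{min}$ in Case~1 and $\subscr{u}{max}$ in Case~2. First I would reduce the joint optimization to a comparison of two scalar alternatives. For each fixed $v_1$, Lemma~\ref{lem:mon_2_E} makes $T_{2}^1(v_1,\cdot)$ piecewise monotone, so its maximum over $[\subscr{u}{min},\subscr{u}{max}]$ is attained at an endpoint; likewise Lemma~\ref{lem:mon_1_E} places the maximizer of $T_{1}^0$ at an endpoint. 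Hence the joint maximizer sits at a corner of $[\subscr{u}{min},\subscr{u}{max}]^2$. Because $T_{1}^0(v_1)$ does not depend on $v_2$, at any joint maximizer the second coordinate must coincide with $E_2$'s own-time-maximizing response to $v_1$, given by the threshold rule of Lemma~\ref{lem:greed_2_MD} (with $v_{1g}^*$ replaced by the fixed $v_1$). It therefore suffices to compare the best total time achievable with $v_1=\subscr{u}{min}$ against that achievable with $v_1=\subscr{u}{max}$.

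Specializing to Case~1, the bounds $Y-\Delta x_{1}^0V\le y_1\le Y-\Delta x_{1}^0\subscr{u}{min}$ have two consequences. The lower bound places $y_1$ in the regime where, by Theorem~\ref{lem:opt_MD_1}, $E_1$'s greedy speed is $\subscr{u}{max}$; the upper bound gives $\Delta x_{1}^0\le(Y-y_1)/\subscr{u}{min}$, so that moving at $\subscr{u}{min}$ places $E_1$'s interception below the pursuer and puts $T_{1}^0$ in its decreasing branch (Lemma~\ref{lem:mon_1_E}), genuinely lengthening the pursuer's vertical travel toward $E_2$. I would then use Lemma~\ref{lem:greed_2_MD} together with the positional constraint~\eqref{eq:2_coop_MD} to identify $E_2$'s endpoint response at $v_1=\subscr{u}{min}$ and at $v_1=\subscr{u}{max}$, singling out the two competing configurations $(\subscr{u}{min},\subscr{u}{max})$ (Fig.~\ref{fig:MD_2_coop}(b)) and $(\subscr{u}{max},\subscr{u}{min})$ (Fig.~\ref{fig:MD_2_coop}(c)) already analyzed inside the proof of Lemma~\ref{lem:coop_2E_MD}.

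Finally I would invoke the comparison carried out there: the inequality $T_{1}^0(\subscr{u}{min})+T_{2}^1(\subscr{u}{min},\subscr{u}{max})>T_{1}^0(\subscr{u}{max})+T_{2}^1(\subscr{u}{max},\subscr{u}{min})$ was shown to be equivalent to the cooperation condition~\eqref{eq:2_coop_MD}. Since~\eqref{eq:2_coop_MD} is assumed in Case~1, the configuration with $v_1=\subscr{u}{min}$ strictly dominates the one with $v_1=\subscr{u}{max}$, so $v_{1c}^*=\subscr{u}{min}$. Case~2 is the mirror image: the bounds $Y-\Delta x_{1}^0\subscr{u}{max}\le y_1\le Y-\Delta x_{1}^0V$ make $\subscr{u}{min}$ greedy for $E_1$, the feasible cooperative deviation is $v_1=\subscr{u}{max}$ (so that $E_1$'s interception moves above the pursuer), and condition~\eqref{eq:2_coop_MD_2} is exactly the inequality certifying that this deviation increases the total time, giving $v_{1c}^*=\subscr{u}{max}$.

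The step I expect to be the main obstacle is justifying that the two configurations above are the only corners that must be compared, i.e.\ that the ``collapse'' corner $(\subscr{u}{max},\subscr{u}{max})$ and the corner $(\subscr{u}{min},\subscr{u}{min})$ cannot dominate. This reduces to checking $E_2$'s best-response thresholds from Lemma~\ref{lem:greed_2_MD} against the Case~1 bounds, and the check turns on the sign of the correction term $\tfrac{\subscr{u}{min}\Delta x_{1}^0+y_1-Y}{2+\subscr{u}{min}-\subscr{u}{max}}$ appearing in~\eqref{eq:2_coop_MD}, which is nonpositive throughout Case~1; this sign is precisely what aligns the cooperation threshold with $E_2$ responding by $\subscr{u}{max}$ to $v_1=\subscr{u}{min}$ and by $\subscr{u}{min}$ to $v_1=\subscr{u}{max}$.
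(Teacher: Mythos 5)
Your overall architecture is sound and in fact more explicit than the paper's own two-line proof (which simply notes that in Case~1 cooperation confines $E_1$ to the below-intercept branch $\Delta x_{1}^0<(Y-y_1)/v_1$ and then invokes monotonicity of $T_{1}^0$ via Lemma~\ref{lem:mon_1_E} to select $\subscr{u}{min}$, with Lemma~\ref{lem:coop_2E_MD} certifying that cooperation beats greedy). However, your resolution of the very obstacle you flag --- eliminating the corner $(\subscr{u}{max},\subscr{u}{max})$ --- is wrong as stated. Since the correction term $\kappa:=\tfrac{\subscr{u}{min}\Delta x_{1}^0+y_1-Y}{2+\subscr{u}{min}-\subscr{u}{max}}$ is nonpositive in Case~1, condition~\eqref{eq:2_coop_MD} is \emph{implied by}, not incompatible with, $y_2\geq y_2'(\subscr{u}{max}):=y_1-\Delta x_{2}^1V+(\subscr{u}{max}-V)T_{1}^0(\subscr{u}{max})$: Case~1 contains the entire region in which $E_2$'s best response to $v_1=\subscr{u}{max}$ is $\subscr{u}{max}$, so you cannot conclude that $E_2$ responds to $\subscr{u}{max}$ with $\subscr{u}{min}$. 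The nonpositive sign of $\kappa$ widens the cooperation region \emph{below} the greedy threshold; it does nothing to exclude the region above it, where $(\subscr{u}{max},\subscr{u}{max})$ is a legitimate competing corner that your threshold check does not touch.

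The repair is the collapse argument already inside the proof of Lemma~\ref{lem:coop_2E_MD}, not a best-response threshold. In Case~1 the lower bound $y_1\geq Y-\Delta x_{1}^0V$ forces $\Delta x_{1}^0>(Y-y_1)/\subscr{u}{max}$ (as $V<\subscr{u}{max}$), so $(1-\subscr{u}{max})T_{1}^0(\subscr{u}{max})=\Delta x_{1}^0+y_1-Y$ and $T_{1}^0(\subscr{u}{max})+T_{2}^1(\subscr{u}{max},\subscr{u}{max})=\tfrac{\Delta x_{1}^0+\Delta x_{2}^1+y_2-Y}{1-\subscr{u}{max}}$, while $T_{1}^0(\subscr{u}{min})+T_{2}^1(\subscr{u}{min},\subscr{u}{max})$ exceeds this by $\tfrac{2\left(Y-y_1-\subscr{u}{min}\Delta x_{1}^0\right)}{(1+\subscr{u}{min})(1-\subscr{u}{max})}\geq 0$, which is nonnegative precisely by the Case~1 upper bound $y_1\leq Y-\Delta x_{1}^0\subscr{u}{min}$; hence $(\subscr{u}{min},\subscr{u}{max})$ dominates $(\subscr{u}{max},\subscr{u}{max})$ unconditionally in Case~1, and \eqref{eq:2_coop_MD} then settles the remaining comparison against $(\subscr{u}{max},\subscr{u}{min})$ exactly as you say. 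A smaller inaccuracy of the same kind: your claim that $E_2$ responds to $v_1=\subscr{u}{min}$ with $\subscr{u}{max}$ does not follow from $\kappa\leq 0$ alone; it requires $y_2'(\subscr{u}{max})+2\kappa\geq y_2'(\subscr{u}{min})$, which simplifies to $\tfrac{2\left(\subscr{u}{max}\Delta x_{1}^0-(Y-y_1)\right)}{2+\subscr{u}{min}-\subscr{u}{max}}>0$ and thus rests on the Case~1 \emph{lower} bound, not on the sign of $\kappa$. With these two patches your corner enumeration closes correctly, and the Case~2 mirror needs the analogous collapse comparison against the corner $(\subscr{u}{min},\subscr{u}{min})$.
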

\begin{proof}
Consider that case 1 of Lemma \ref{lem:coop_2E_MD} holds. Then
 $E_1$ moves with speed $v_1$ such that $\Delta x_{1}^0<(Y-y_1)/v_1$. We know from Lemma \ref{lem:coop_2E_MD} that the conditions on cooperation ensure that the total time to intercept during cooperation is higher than the greedy choice. Since $T_{1}^0(v_1)$ is monotonic in $v_1$, from Lemma \ref{lem:mon_1_E}, $v_{1c}^*=\subscr{u}{min}$. The second case is derived analogously. This concludes the proof.
\end{proof}

In this subsection, we analyzed the case of 2 evaders, primarily to highlight the underlying problem structure.
Next, we will consider the case of $n$ evaders. Similar to the two evader case, we will first present a result on the time taken to intercept the $k^{th}$ evader after intercepting the $k-1^{th}$ evader. Then we will present results on the greedy and cooperative strategies between $E_k$ and $E_{k-1}$.

\subsection{$n$ Evaders}

For ease of presentation, we will denote $(y_i-y_j)$ as $\Delta y_{j}^i$ for some $i,j$ and for brevity, we denote $T_{i}^{i-1}(v_1,\dots,v_i)$ as $T_{i}^{i-1}(v_{-i},v_i)$.
We present the following condition for ease of reference.
\begin{equation}\label{eq:greed_n}
    \Delta x_{k}^{k-1}>\frac{\Delta y_{k}^{k-1} + (v_{k-1}-v_k)\sum_{i=1}^{k-1} T_{i}^{i-1}}{v_k}.
\end{equation}

\begin{Lemma}[Time to intercept $E_k$]\label{lem:intercepttime_En_MD}
The time $T_{k}^{k-1}(v_{-k},v_k)$ taken by $P$ to intercept $E_k$, moving with $v_k$, after intercepting $E_{k-1}$, moving with $v_{k-1}$, is
\begin{align*}
   T_{k}^{k-1} = \begin{cases}
        \frac{\Delta x_{k}^{k-1} + \Delta y_{k-1}^k + (v_k-v_{k-1})\sum_{i=1}^{k-1} T_{i}^{i-1}}{1-v_k},\text{ if \eqref{eq:greed_n} holds,}\\
        \frac{\Delta x_{k}^{k-1}+\Delta y_{k}^{k-1} + (v_{k-1}-v_k)\sum_{i=1}^{k-1} T_{i}^{i-1}}{1+v_k}, \text{otherwise.}
    \end{cases}
\end{align*}
\end{Lemma}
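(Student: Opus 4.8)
The plan is to mirror the two-evader argument of Lemma~\ref{lem:intercepttimeE2}, generalizing the bookkeeping of elapsed time to the $k$-th stage of the pursuit. The central observation is that $\sum_{i=1}^{k-1}T_{i}^{i-1}$ is exactly the total time elapsed from the start of the game until the pursuer intercepts $E_{k-1}$. Since every evader moves continuously from $t=0$, at that instant $E_{k-1}$ sits at $\big(x_{k-1},\, y_{k-1}+v_{k-1}\sum_{i=1}^{k-1}T_{i}^{i-1}\big)$ and $E_k$ sits at $\big(x_k,\, y_k+v_k\sum_{i=1}^{k-1}T_{i}^{i-1}\big)$. Because interception means the pursuer coincides with the evader, the pursuer's location at this instant equals $E_{k-1}$'s location. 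Establishing this stage-$k$ initial configuration cleanly is the key first step, and I expect it to be the main obstacle, since it tacitly requires that $\sum_{i=1}^{k-1}T_{i}^{i-1}$ telescopes to the genuine elapsed time; I would either invoke the definition of each $T_{i}^{i-1}$ as the duration between consecutive intercepts, or set it up as a short induction on $k$.

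With the starting configuration of stage $k$ fixed, I would next run the two phases of the Manhattan pursuit of $E_k$. In phase~(1) the pursuer translates parallel to the $X$-axis for a duration $\Delta x_{k}^{k-1}=\abs{x_{k-1}-x_k}$, during which $E_k$ rises by $v_k\Delta x_{k}^{k-1}$; hence at the end of phase~(1) the pursuer is at $\big(x_k,\, y_{k-1}+v_{k-1}\sum_{i=1}^{k-1}T_{i}^{i-1}\big)$ while $E_k$ is at height $y_k+v_k\sum_{i=1}^{k-1}T_{i}^{i-1}+v_k\Delta x_{k}^{k-1}$. Comparing these two heights reproduces exactly condition~\eqref{eq:greed_n}: a direct rearrangement shows it holds if and only if $E_k$ lies strictly above the pursuer after phase~(1).

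Finally I would split into the two cases. If \eqref{eq:greed_n} holds, $E_k$ is above the pursuer, so in phase~(2) the pursuer ascends at unit speed chasing $E_k$ and the vertical gap $\Delta y_{k-1}^k+v_k\Delta x_{k}^{k-1}+(v_k-v_{k-1})\sum_{i=1}^{k-1}T_{i}^{i-1}$ closes at relative speed $1-v_k$; dividing this gap by $1-v_k$ and adding the phase-(1) duration $\Delta x_{k}^{k-1}$ makes the $v_k\Delta x_{k}^{k-1}$ terms cancel and returns the first branch of the claimed formula. If \eqref{eq:greed_n} fails, the pursuer is at or above $E_k$ and must descend while $E_k$ still rises, so the gap $\Delta y_k^{k-1}+(v_{k-1}-v_k)\sum_{i=1}^{k-1}T_{i}^{i-1}-v_k\Delta x_{k}^{k-1}$ closes at relative speed $1+v_k$; the analogous computation, again adding $\Delta x_{k}^{k-1}$ and simplifying, yields the second branch. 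The algebra is routine once the two gaps are written correctly, so the only genuinely delicate point remains the identification of the elapsed time and the resulting stage-$k$ initial positions.
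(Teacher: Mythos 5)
Your proposal is correct and takes essentially the same route as the paper: the paper proves the lemma by induction on $k$ with Lemma~\ref{lem:intercepttimeE2} as the base case, and its inductive step is exactly your two-phase computation from the stage-$k$ configuration, where condition~\eqref{eq:greed_n} is shown to hold iff $E_k$ lies strictly above the pursuer after phase~(1). Your identification of $\sum_{i=1}^{k-1}T_i^{i-1}$ as the elapsed time at the intercept of $E_{k-1}$ (and hence the pursuer's height $y_{k-1}+v_{k-1}\sum_{i=1}^{k-1}T_i^{i-1}$) is precisely the bookkeeping the paper's induction carries along, so the two arguments differ only in presentation.
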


\begin{proof}
We establish this result using mathematical induction. Lemma \ref{lem:intercepttimeE2} yields the base of induction for $k=2$.
Assume that for some $k=\bar k$, the result holds.
Consider that the initial location of the next evader, $E_{\bar k+1}$, is such that $\Delta x_{\bar k+1}^{\bar k} > (y_{\bar k}-y_{\bar k+1} + (v_{\bar k}-v_{\bar k+1})\sum_{i=1}^{\bar k} T_{i}^{i-1})/v_{\bar k+1}$. This means that after the completion of stage (1) of simple pursuit of $E_{\bar k+1}$, the X-coordinate of the pursuer equals $x_{\bar k+1}$ and at the same time, the evader's y-coordinate strictly exceeds the pursuers y-coordinate $(=y_{\bar k}+v_{\bar k}\sum_{i=1}^{\bar k}T_{i}^{i-1})$.
Thus, the time to intercept $E_{\bar k+1}$ after intercepting $E_{\bar k}$ will be 
\begin{align*}
    \frac{\Delta x_{\bar k+1}^{\bar k} + y_{\bar k+1}-y_{\bar k} + (v_{\bar k+1}-v_{\bar k})\sum_{i=1}^{\bar k} T_{i}^{i-1}}{1-v_{\bar k+1}}.
\end{align*}
Thus, by induction the result holds for any value of $\bar k$. The other case is derived analogously.
\end{proof}
\begin{Lemma}[Monotonicity of time to intercept]\label{lem:mon_n_E}
Given that each $E_i$, $i\in{1, \dots, k-1}$ moves with $v_i$, the time $T_{k}^{k-1}$ is monotonically increasing function of $v_k$ if condition \eqref{eq:greed_n} holds. Otherwise, $T_{k}^{k-1}$ is a monotonically decreasing function of $v_2$.
\end{Lemma}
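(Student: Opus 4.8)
The plan is to mirror the proof of Lemma \ref{lem:mon_2_E}: take the closed form for $T_k^{k-1}$ supplied by Lemma \ref{lem:intercepttime_En_MD} and differentiate it with respect to $v_k$ on each of the two branches, reading off the sign of $dT_k^{k-1}/dv_k$. The one structural observation that turns this into a one-variable calculus problem is that the cumulative term $\sum_{i=1}^{k-1}T_i^{i-1}$ depends only on $v_1,\dots,v_{k-1}$ and is therefore constant with respect to $v_k$; I would abbreviate it as $S:=\sum_{i=1}^{k-1}T_i^{i-1}>0$, and likewise treat $\Delta x_k^{k-1}\ge 0$, $y_{k-1}$, $y_k$, and $v_{k-1}$ as fixed quantities.

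First, suppose condition \eqref{eq:greed_n} holds, so $T_k^{k-1}=\big(\Delta x_k^{k-1}+\Delta y_{k-1}^k+(v_k-v_{k-1})S\big)/(1-v_k)$. Applying the quotient rule and simplifying, the explicit $v_k$-terms in the numerator cancel and I obtain
\[
\frac{dT_k^{k-1}}{dv_k}=\frac{\Delta x_k^{k-1}+y_{k-1}-y_k+S(1-v_{k-1})}{(1-v_k)^2}.
\]
The denominator is positive, so everything reduces to signing the numerator. This is exactly where condition \eqref{eq:greed_n} enters: it gives $y_k-y_{k-1}<v_k\Delta x_k^{k-1}+(v_k-v_{k-1})S$, and substituting this bound shows the numerator strictly exceeds $(1-v_k)\big(\Delta x_k^{k-1}+S\big)>0$, the positivity coming from the slower-evader assumption $v_k<1$. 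Hence $T_k^{k-1}$ is increasing in $v_k$.

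Second, when \eqref{eq:greed_n} fails, $T_k^{k-1}=\big(\Delta x_k^{k-1}+\Delta y_k^{k-1}+(v_{k-1}-v_k)S\big)/(1+v_k)$, and the same differentiation yields
\[
\frac{dT_k^{k-1}}{dv_k}=-\frac{\Delta x_k^{k-1}+y_k-y_{k-1}+S(1+v_{k-1})}{(1+v_k)^2}.
\]
Now the negation of \eqref{eq:greed_n} furnishes the reverse bound $y_k-y_{k-1}\ge v_k\Delta x_k^{k-1}+(v_k-v_{k-1})S$, which forces the bracketed numerator to exceed $(1+v_k)\big(\Delta x_k^{k-1}+S\big)>0$; the leading minus sign then makes the derivative negative, so $T_k^{k-1}$ is decreasing in $v_k$ (the trailing ``$v_2$'' in the statement is a typo for ``$v_k$'').

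The only genuine subtlety — and the step I would flag as the crux — is the sign of these numerators. Neither is positive by inspection, since $y_k-y_{k-1}$ is a priori unconstrained and can be large; it is precisely condition \eqref{eq:greed_n} (respectively its negation) that pins down the correct inequality on $y_k-y_{k-1}$, and the assumption $v_k,v_{k-1}\in(0,1)$ that converts the resulting estimate into the clean lower bound $(1\mp v_k)\big(\Delta x_k^{k-1}+S\big)$. Everything else is the routine quotient-rule bookkeeping already carried out for the $n=2$ case.
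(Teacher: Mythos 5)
Your proof is correct, and it carries out in explicit form exactly the computation the paper's proof only gestures at. The paper dresses Lemma \ref{lem:mon_n_E} as an induction whose base case is Lemma \ref{lem:mon_2_E} and whose inductive step is the unverified claim that $dT_{\bar k+1}^{\bar k}/dv_{\bar k+1}>0$ under \eqref{eq:greed_n}; you observe, correctly, that no induction is needed once the closed form of Lemma \ref{lem:intercepttime_En_MD} is available, because $S=\sum_{i=1}^{k-1}T_i^{i-1}$ does not depend on $v_k$, so each branch is a one-variable quotient-rule exercise. You then actually verify the step the paper labels ``it can be checked,'' producing the strict lower bound $(1\mp v_k)\bigl(\Delta x_k^{k-1}+S\bigr)>0$ on the derivative's numerator; your identification of that bound as the crux is accurate, since the numerator contains the unconstrained quantity $y_k-y_{k-1}$ and is not positive by inspection---it is precisely \eqref{eq:greed_n} (or its negation) that pins it down. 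So the route is the same calculus as the paper's, minus the superfluous induction scaffolding and plus the missing estimate.

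One notational caution, which is a slip but not a gap: the paper defines $\Delta y_j^i:=y_i-y_j$, so $\Delta y_{k-1}^k=y_k-y_{k-1}$ and condition \eqref{eq:greed_n} constrains $y_{k-1}-y_k$, not $y_k-y_{k-1}$. Your two displayed derivative numerators and your two restatements of \eqref{eq:greed_n} all use the opposite sign convention; because you apply the flip consistently to both the closed form and the condition, the two errors cancel and your final bounds and conclusions are exactly right, but as written the intermediate displays contradict the paper's definition: the first numerator should read $\Delta x_k^{k-1}+y_k-y_{k-1}+S(1-v_{k-1})$ and the second $\Delta x_k^{k-1}+y_{k-1}-y_k+S(1+v_{k-1})$, with the extracted inequalities flipped to match. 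You are also right that the trailing ``$v_2$'' in the lemma statement is a typo for ``$v_k$.''
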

\begin{proof}
We only provide an outline. We use induction to establish the result. Lemma \ref{lem:mon_2_E} yields the base of the induction. Assuming that the result holds for some $k=\bar{k}$, it can be checked that $\frac{dT_{\bar{k}+1}^{\bar{k}}}{dv_k}>0$ if condition \eqref{eq:greed_n} holds. Otherwise, $\frac{dT_{\bar{k}+1}^{\bar{k}}}{dv_k}<0$. This concludes the proof.
\end{proof}
\medskip

Since Lemma \ref{lem:mon_n_E} establishes that the time to intercept a $k^{th}$ evader is maximized at either $\subscr{u}{min}$ or $\subscr{u}{max}$, finding an optimal strategy for all evaders would require analyzing all $2^n$ possibilities in the worst case.

We now present an algorithm that assigns respective strategies to the evaders in just two iterations. 
The algorithm, summarized in Algorithm \ref{algo:gesec}, first assigns the greedy strategies to all evaders. Then, it assigns cooperative strategies by considering two sequentially paired evaders at a time. \\
Now, we will present the results that the algorithm uses in assigning the strategies.
\begin{algorithm}[t]
	\DontPrintSemicolon
	\SetAlgoLined
	Assign greedy speeds to all evaders\\
	\eIf{$E_i$ and $E_{i+1}$ can cooperate, $\forall \text{ }1<i<n$,}{
		Assign optimal cooperative strategy \;}
	{Assign optimal greedy strategy. \;
	}
	Repeat from step 2.\;
	\caption{Seq-GreC Algorithm}
	\label{algo:gesec}
\end{algorithm}

\begin{Lemma}[Evader k's greedy strategy]\label{lem:greed_n_MD}
The greedy strategy $v_{kg}^*$ for $E_k$, when each $E_i$, $i\in\{1, \dots, k-1\}$ moves with $v_{1g}^*$ is
\begin{align*}
    v_{kg}^*=
    \begin{cases}
        \subscr{u}{max},\text{ if } y_k\geq
        y_{k-1}-\Delta x_{k}^{k-1}V+\\(v_{(k-1)g}^*-V)\sum_{i=1}^{k-1}T_{i}^{i-1},\\
        \subscr{u}{min}, \text{ otherwise.}
    \end{cases}
\end{align*}
\end{Lemma}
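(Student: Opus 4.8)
The plan is to follow the template of the two-evader greedy result, Lemma \ref{lem:greed_2_MD}, since the quantity $\sum_{i=1}^{k-1} T_{i}^{i-1}$ is completely determined once every earlier evader $E_i$ has committed to its greedy speed $v_{ig}^*$. Consequently $T_{k}^{k-1}(v_{-k},v_k)$ becomes a function of the single variable $v_k$, and by Lemma \ref{lem:mon_n_E} it is monotonic in $v_k$; hence its maximum over $[\subscr{u}{min},\subscr{u}{max}]$ is attained at one of the two endpoints. The entire argument therefore reduces to deciding which endpoint wins as a function of the initial height $y_k$.

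First I would locate the critical height $y_k'$ at which the two candidate intercept times coincide, i.e.\ $T_{k}^{k-1}(v_{-k},\subscr{u}{max}) = T_{k}^{k-1}(v_{-k},\subscr{u}{min})$. For this equality to be meaningful, the faster speed $\subscr{u}{max}$ must place the pursuit in the increasing branch (condition \eqref{eq:greed_n} holds) while the slower speed $\subscr{u}{min}$ must place it in the ``otherwise'' branch (condition \eqref{eq:greed_n} fails). Substituting the two branches of Lemma \ref{lem:intercepttime_En_MD} and cross-multiplying, the terms $\Delta x_{k}^{k-1}$, $\Delta y_{k-1}^k$, and $\sum_{i=1}^{k-1} T_{i}^{i-1}$ separate linearly; collecting them and dividing by $2+\subscr{u}{min}-\subscr{u}{max}$ should produce exactly
\begin{equation*}
y_k' = y_{k-1} - \Delta x_{k}^{k-1}V + \big(v_{(k-1)g}^*-V\big)\sum_{i=1}^{k-1} T_{i}^{i-1},
\end{equation*}
where $V=\frac{\subscr{u}{min}+\subscr{u}{max}}{2+\subscr{u}{min}-\subscr{u}{max}}$ as defined just before Lemma \ref{lem:greed_2_MD}.

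With $y_k'$ in hand I would close the argument by the same contradiction used for $E_2$. If $y_k=y_k'$, the two times are equal and, by Lemma \ref{lem:mon_n_E}, the intercept time is maximized at both endpoints. If $y_k<y_k'$, suppose for contradiction that the maximum were attained at $\subscr{u}{max}$, i.e.\ $T_{k}^{k-1}(v_{-k},\subscr{u}{max}) > T_{k}^{k-1}(v_{-k},\subscr{u}{min})$; unwinding the same inequality that defined $y_k'$ forces $y_k>y_k'$, contradicting the assumption, so the maximizer must be $\subscr{u}{min}$. The case $y_k \geq y_k'$ is symmetric and yields $\subscr{u}{max}$, establishing the two branches of the claimed formula.

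I expect the main obstacle to be the bookkeeping of the regime conditions rather than the core algebra: I must verify that the two branches invoked in the definition of $y_k'$ are the consistent ones (condition \eqref{eq:greed_n} active at $\subscr{u}{max}$ and inactive at $\subscr{u}{min}$), and separately handle the degenerate situations in which \eqref{eq:greed_n} fails already at $\subscr{u}{max}$ or holds already at $\subscr{u}{min}$. As in Lemma \ref{lem:greed_2_MD}, these degenerate cases should be shown to force $y_k<y_k'$ and $y_k \geq y_k'$ respectively, so that they fold cleanly into the two stated branches.
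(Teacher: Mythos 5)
Your proposal is correct and takes essentially the same approach as the paper's proof: compute the critical height $y_k'=y_{k-1}-\Delta x_{k}^{k-1}V+(v_{(k-1)g}^*-V)\sum_{i=1}^{k-1}T_{i}^{i-1}$ by equating the two branches of Lemma~\ref{lem:intercepttime_En_MD} (with \eqref{eq:greed_n} active at $\subscr{u}{max}$ and inactive at $\subscr{u}{min}$), restrict to the endpoints via Lemma~\ref{lem:mon_n_E}, and conclude by the same contradiction argument used for $E_2$, folding the degenerate regime cases into the two branches. The only cosmetic difference is that the paper phrases the reduction as an induction on $k$, while you observe directly that the lemma's hypothesis fixes the earlier greedy speeds and hence $\sum_{i=1}^{k-1}T_{i}^{i-1}$, making $T_{k}^{k-1}$ a function of $v_k$ alone---the same substance.
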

\begin{proof}
Suppose the result holds for some $k=\bar k$.
Consider the next evader, $E_{\bar k+1}$. Similar to the proof of Lemma \ref{lem:greed_2_MD}, we find $y_{\bar k+1}'=y_{\bar k}-\Delta x_{\bar k+1}^{\bar k}V+(v_{\bar kg}^*-V)\sum_{i=1}^{\bar k}T_{i}^{i-1}$. If $y_{\bar k+1}<y_{\bar k+1}'$, then, from Lemma \ref{lem:mon_n_E}, the time will be maximized at either $\subscr{u}{min}$ or $\subscr{u}{max}$. Thus, assuming $T_k^{k-1}(v_{1g}^*,\dots,v_{(k-1)g}^*,\subscr{u}{max})>T_k^{k-1}(v_{1g}^*,\dots,v_{(k-1)g}^*,\subscr{u}{min})$ yields $y_{\bar k+1}>y_{\bar k+1}'$ which is a contradiction and so $v_{kg}^*=\subscr{u}{min}$. Moreover, by induction, the result holds for any value of $\bar{k}$. Case 2 is proved analogously.
This concludes our proof.
\end{proof}
\medskip

The previous lemma presented a result on the greedy strategy of any evader $E_k$. This result is the first step of the Algorithm \ref{algo:gesec}. As the second step of Algorithm \ref{algo:gesec} requires to check the conditions of cooperation between two consecutive evaders, we will now present a result on the conditions if two evaders should cooperate or not. We introduce the notation, $U:=\frac{2}{2+\subscr{u}{min}-\subscr{u}{max}}$.

\begin{Lemma}[Cooperation conditions for $E_{k-1}$]\label{lem:coop_nE_MD}
Given the initial locations of $E_{k-1}$, $E_k$, and $P$ as $(x_{k-1},y_{k-1})$, $(x_k,y_k)$, and $(X,Y)$ respectively, then $E_{k-1}$ will cooperate with $E_k$ if\\
\noindent {\bf (i) Case 1:}
        \begin{align*}
            &y_{k-2}+(v_{(k-2)a}^*-V)\sum_{i=1}^{k-2}T_{i}^{i-1}-\Delta x_{k-1}^{k-2}V\leq y_{k-1}\\
            &\leq y_{k-2}
            +(v_{(k-2)a}^*-\subscr{u}{min})\sum_{i=1}^{k-2}T_{i}^{i-1}- \Delta x_{k-1}^{k-2}\subscr{u}{min}
        \end{align*}
        and 
        \begin{align}\label{eq:n_coop_MD}
            &\Delta y_{k}^{k-1}>-\Delta x_{k-1}^k V+ (v_{(k-1) g}^*-V)\sum_{i=1}^{k}T_{i}^{i-1}+\nonumber\\
            &U\big(\subscr{u}{min}\Delta x_{k-1}^{k-2}+ \Delta y_{k-2}^{k-1} -(v_{(k-2)a}^*-\subscr{u}{min})\sum_{i=1}^{k-2}T_{i}^{i-1}\big)
        \end{align}
        
\noindent {\bf (ii) Case 2:}
        \begin{align*}
            &y_{k-2}+(v_{(k-2)a}^*-\subscr{u}{max})\sum_{i=1}^{k-2}T_{i}^{i-1}-\Delta x_{k-1}^{k-2}\subscr{u}{max}\\
            &\leq y_{k-1}\leq y_{k-2}+(v_{(k-2)a}^*-V)\sum_{i=1}^{k-2}T_{i}^{i-1}-\Delta x_{k-1}^{k-2}V
        \end{align*}
        and
       \begin{equation}\label{eq:n_coop_MD_2}
       \begin{split}
            &\Delta y_{k-1}^k<-\Delta x_{k}^{k-1}V+ (v_{(k-1)g}^*-V)\sum_{i=1}^{k}T_{i}^{i-1}+\\
            &U\big(\subscr{u}{max}\Delta x_{k-1}^{k-2}+\Delta y_{k-2}^{k-1}       -(v_{(k-2)a}^*-\subscr{u}{max})\sum_{i=1}^{k-2}T_{i}^{i-1}\big),
            \end{split}
        \end{equation}
        
where $v_{(k-2)a}^*$ determined by Algorithm \ref{algo:gesec}.
\end{Lemma}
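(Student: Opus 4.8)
The plan is to reduce the $n$-evader situation to the two-evader analysis of Lemma \ref{lem:coop_2E_MD} by regarding the instant at which the pursuer finishes intercepting $E_{k-2}$ as a fresh starting configuration. At that moment the pursuer sits at $x$-coordinate $x_{k-2}$ and $y$-coordinate $\tilde{Y}:=y_{k-2}+v_{(k-2)a}^*\sum_{i=1}^{k-2}T_{i}^{i-1}$, having consumed time $\sum_{i=1}^{k-2}T_{i}^{i-1}$. First I would argue that, from the viewpoint of the pair $(E_{k-1},E_k)$, the remaining pursuit is exactly a two-evader problem in which $\tilde{Y}$ plays the role of the pursuer's initial height $Y$ and the elapsed time $\sum_{i=1}^{k-2}T_{i}^{i-1}$ accounts for the accumulated-time terms appearing in Lemma \ref{lem:intercepttime_En_MD}. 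This lets me invoke the monotonicity result of Lemma \ref{lem:mon_n_E} to conclude, exactly as in the two-evader case, that only the extreme speeds $\subscr{u}{min}$ and $\subscr{u}{max}$ can be optimal for either $E_{k-1}$ or $E_k$, so the whole comparison is among the four extreme speed combinations.

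Next I would pin down the admissible range of $y_{k-1}$ in Case 1 as the intersection of two inequalities. The lower bound $y_{k-1}\geq y_{k-2}+(v_{(k-2)a}^*-V)\sum_{i=1}^{k-2}T_{i}^{i-1}-\Delta x_{k-1}^{k-2}V$ is precisely the threshold from Lemma \ref{lem:greed_n_MD} (with $E_{k-2}$ moving at its algorithm-assigned speed $v_{(k-2)a}^*$) above which $E_{k-1}$'s own greedy choice is $\subscr{u}{max}$; above this height a greedy $E_{k-1}$ is intercepted above the pursuer, so the Manhattan $y$-travel to reach $E_{k-1}$ then $E_k$ collapses to the $y$-travel of reaching $E_k$ directly and $E_{k-1}$ contributes nothing. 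The upper bound $y_{k-1}\leq y_{k-2}+(v_{(k-2)a}^*-\subscr{u}{min})\sum_{i=1}^{k-2}T_{i}^{i-1}-\Delta x_{k-1}^{k-2}\subscr{u}{min}$ is exactly the statement that condition \eqref{eq:greed_n}, specialised to the pursuit of $E_{k-1}$ with $v_{k-1}=\subscr{u}{min}$, \emph{fails}; equivalently, $E_{k-1}$ can be driven so that it is intercepted below the pursuer, which is the only mechanism by which cooperation can manufacture extra $y$-distance.

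With the range fixed, cooperation is worthwhile precisely when the slow-$E_{k-1}$/fast-$E_k$ plan beats the greedy plan, i.e.\ when
\begin{equation*}
T_{k-1}^{k-2}(\subscr{u}{min})+T_{k}^{k-1}(\subscr{u}{min},\subscr{u}{max})>T_{k-1}^{k-2}(\subscr{u}{max})+T_{k}^{k-1}(\subscr{u}{max},\subscr{u}{min}).
\end{equation*}
Substituting the relevant branches of Lemma \ref{lem:intercepttime_En_MD} (the ``below'' branch for $E_{k-1}$ at $\subscr{u}{min}$ and the two branches for $E_k$), clearing the denominators $1-\subscr{u}{max}$ and $1+\subscr{u}{min}$, and collecting terms is what should reproduce inequality \eqref{eq:n_coop_MD}. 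The constants $V=\frac{\subscr{u}{min}+\subscr{u}{max}}{2+\subscr{u}{min}-\subscr{u}{max}}$ and $U=\frac{2}{2+\subscr{u}{min}-\subscr{u}{max}}$ arise exactly from the common factor $2+\subscr{u}{min}-\subscr{u}{max}$ produced by cross-multiplying those two denominators. Case 2 is the mirror image: $E_{k-1}$'s greedy choice is $\subscr{u}{min}$, cooperation instead drives $E_{k-1}$ with $\subscr{u}{max}$ while $E_k$ is slowed to $\subscr{u}{min}$, and the reversed inequality yields \eqref{eq:n_coop_MD_2}.

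The step I expect to be the main obstacle is the bookkeeping in this final comparison: unlike the two-evader case, every occurrence of $T_{1}^0(v_{1g}^*)$ is now replaced by the accumulated sum $\sum_{i=1}^{k-2}T_{i}^{i-1}$ together with the offset $(v_{(k-2)a}^*-\cdot)\sum_{i=1}^{k-2}T_{i}^{i-1}$ inherited from $E_{k-2}$, and one must verify that the reduction to a two-evader problem is legitimate \emph{regardless} of whether the algorithm previously assigned $E_{k-2}$ a greedy or a cooperative speed. Arranging $\tilde{Y}$ so that it carries all of this accumulated history cleanly---so that the two-evader algebra goes through verbatim with $\tilde{Y}$ in place of $Y$---is the delicate point; once that is justified, the remaining manipulations are the same cross-multiplication already carried out for Lemma \ref{lem:coop_2E_MD}.
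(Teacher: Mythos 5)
Your proposal is correct and follows essentially the same route as the paper's proof: the paper also restricts attention to the extreme speeds via Lemma \ref{lem:mon_n_E}, obtains the window on $y_{k-1}$ from the greedy threshold of Lemma \ref{lem:greed_n_MD} together with the failure of condition \eqref{eq:greed_n} at $\subscr{u}{min}$, and derives \eqref{eq:n_coop_MD} from exactly your displayed comparison of the two scenarios, merely packaging your ``fresh start at the intercept of $E_{k-2}$'' reduction as an induction step in which the pursuer sits at the previous intercept location (cf.\ Fig.~\ref{fig:MD_2_coop} with relabeled evaders). One small labeling slip, with no mathematical consequence: the right-hand side of your comparison inequality is the alternative cooperative scenario in which $E_{k-1}$ stays greedy and $E_k$ is slowed to $\subscr{u}{min}$ (Fig.~\ref{fig:MD_2_coop}(c)), not the fully greedy plan.
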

\begin{proof}
Let us assume that this result holds for some $k = \bar{k}-1$. The idea is to prove this result using induction by deriving the conditions for cooperation between $E_{\bar k}$ and $E_{\bar{k}+1}$. For brevity, we will reuse Figure \ref{fig:MD_2_coop} with the two evaders $E_1$ and $E_2$ in the figure corresponding to $E_{\bar k}$ and $E_{\bar k+1}$ respectively.
Suppose that the initial location of $E_{\bar{k}}$ satisfies $y_{\bar k}\geq y_{\bar k-1}+(v_{(\bar k-1)a}^*-V)\sum_{i=1}^{\bar k-1}T_i^{i-1}-\Delta x_{\bar k}^{\bar k-1}V$ and the location of $E_{\bar{k}+1}$ satisfies $y_{\bar{k}+1}>y_{\bar k}-\Delta x_{\bar{k}+1}^{\bar{k}} V+ (v_{\bar{k}g}^*-V)\sum_{i=1}^{\bar{k}}T_{i}^{i-1}$. Then, from Lemma \ref{lem:greed_n_MD}, $E_{\bar k}$ moves with $\subscr{u}{max}$ to maximize the component $T_{\bar k}^{\bar k-1}$ out of its intercept time and $E_{\bar k+1}$ moves with $\subscr{u}{max}$ to maximize $T_{\bar k+1}^{\bar k}$. This implies that when $P$ completes stage 1 for the pursuit of $E_{\bar k}$, $P$ is below $E_{\bar k}$. As the pursuer follows the Manhattan pursuit strategy, the time taken to cover the path in the y-direction to intercept $E_{\bar k}$ and then $E_{\bar{k}+1}$ equals the time taken to cover the path in the y-direction to intercept only $E_{{\bar k}+1}$ from the pursuer's location (\ref{fig:MD_2_coop} (a)). Note that the pursuer is located at the intercept location of $E_{\bar k-1}$. 
So, to increase the total time to intercept, $E_{\bar k}$ and $E_{\bar k+1}$ need to cooperate which can occur in only two ways.

The first is that $E_{\bar k}$ moves with speed $v_{\bar k}$ satisfying $\Delta x_{\bar k}^{\bar{k}-1}<(y_{\bar{k}-1}-y_{\bar k}+(v_{(\bar{k}-1)a}^*-v_{\bar{k}})\sum_{i=1}^{\bar{k}-1} T_{i}^{i-1})/v_{\bar{k}}$, which means that $P$ intercepts $E_{\bar k}$ below the intercept point of $E_{\bar k-1}$ and $E_{\bar k+1}$ moves greedily, i.e., with $\subscr{u}{max}$ (Fig. \ref{fig:MD_2_coop} (b)). This is possible only if
     $\Delta y_{\bar{k}-1}^{\bar{k}}\leq 
            (v_{(\bar{k}-1)a}^*-\subscr{u}{min})\sum_{i=1}^{\bar{k}-1}T_i^{i-1}-\Delta x_{\bar{k}}^{\bar{k}-1}\subscr{u}{min}$
holds.
If $\Delta y_{\bar{k}-1}^{\bar{k}}>
            (v_{(\bar{k}-1a)}^*-\subscr{u}{min})\sum_{i=1}^{\bar{k}-1}T_i^{i-1}-\Delta x_{\bar{k}}^{\bar{k}-1}\subscr{u}{min}$, then for any speed $v_{\bar k}$ for $E_{\bar k}$, the total distance covered in the $Y$-direction to intercept $E_{\bar k}$ and $E_{\bar k+1}$ will be the same as the total distance covered to intercept $E_{\bar k+1}$ from the pursuer's location, irrespective of $E_{\bar k}$'s choice.
            
The second case in which $E_{\bar k}$ and $E_{\bar k+1}$ cooperate is if $E_{\bar k}$ moves greedily with $\subscr{u}{max}$ and $E_{k+1}$ moves with speed $v_{\bar k+1}$ such that $y_{\bar{k}+1}<y_{\bar k}-\Delta x_{\bar{k}+1}^{\bar{k}} v_{\bar k+1}+ (v_{\bar{k}g}^*-v_{\bar k+1})\sum_{i=1}^{\bar{k}}T_{i}^{i-1}$, i.e., the intercept location of $E_{\bar k+1}$ is below the intercept location of $E_{\bar k}$ (Fig. \ref{fig:MD_2_coop} (c)). To determine which of the two scenarios yield greater intercept time, we arrive at the condition 
\begin{align*}
\sum_{i=1}^k T_{i}^{i-1}+T_{\bar k+1}^{\bar k}(\subscr{u}{max}) > \sum_{i=1}^k T_{i}^{i-1}+ T_{\bar k+1}^{\bar k}(\subscr{u}{min})
\end{align*}
which yields the conclusion that $E_{\bar k}$ should cooperate with $E_{\bar k+1}$ only if equation \eqref{eq:n_coop_MD} holds. This concludes our proof for case 1. Case 2 can be proved by following the steps for Case 1 and has been omitted for brevity.
\end{proof}
\medskip

Lemma \ref{lem:coop_nE_MD} establishes the conditions for cooperation between any two consecutive evaders. The next result characterizes the cooperative strategies of the evaders.

\begin{theorem}[Cooperative strategy for $E_{k-1}$]\label{thm:opt_strat_n_MD}
If the conditions on cooperation in Lemma \ref{lem:coop_nE_MD} hold, then the optimal strategy $v_{(k-1)c}^*$ for $E_{k-1}$ during cooperation with $E_k$ is\\

$v_{(k-1)c}^*=
\begin{cases}
\subscr{u}{min}, \text{ for case 1 of Lemma \ref{lem:coop_nE_MD} },\\
\subscr{u}{max}, \text{ for case 2 of Lemma \ref{lem:coop_nE_MD}. }
\end{cases}$

\end{theorem}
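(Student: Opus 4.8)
The plan is to obtain Theorem~\ref{thm:opt_strat_n_MD} as an immediate consequence of Lemma~\ref{lem:coop_nE_MD} together with the monotonicity property of Lemma~\ref{lem:mon_n_E}, reproducing for general $n$ the argument already used for the two--evader analogue, Theorem~\ref{thm:opt_strat_2}. The starting observation is that, by Lemma~\ref{lem:mon_n_E}, each intercept--time component $T_i^{i-1}$ is monotone in the corresponding evader speed; hence the optimal cooperative speed of $E_{k-1}$ can only be one of the two extremes $\subscr{u}{min}$ or $\subscr{u}{max}$, and it suffices to decide which extreme is selected in each of the two cases of Lemma~\ref{lem:coop_nE_MD}.

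For Case~1, I would recall that the cooperation singled out in Lemma~\ref{lem:coop_nE_MD} is exactly the configuration in which $E_{k-1}$ is intercepted below the intercept point of $E_{k-2}$ while $E_k$ moves greedily at $\subscr{u}{max}$. Being intercepted below the previous intercept point means that condition~\eqref{eq:greed_n}, specialized to the pair $(E_{k-2},E_{k-1})$, fails for $E_{k-1}$; by Lemma~\ref{lem:mon_n_E} the component $T_{k-1}^{k-2}$ is therefore monotonically decreasing in $v_{k-1}$ and is maximized at $v_{k-1}=\subscr{u}{min}$. Inequality~\eqref{eq:n_coop_MD} is precisely the certificate that this configuration, with total $\sum_{i=1}^{k-2}T_i^{i-1}+T_{k-1}^{k-2}(\subscr{u}{min})+T_k^{k-1}(\subscr{u}{min},\subscr{u}{max})$, yields a strictly larger intercept time than the alternative cooperative configuration in which $E_{k-1}$ moves at $\subscr{u}{max}$ and $E_k$ at $\subscr{u}{min}$. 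Combining the two facts gives $v_{(k-1)c}^*=\subscr{u}{min}$. Case~2 is handled symmetrically: there the admissible cooperation places $E_{k-1}$ in the regime where condition~\eqref{eq:greed_n} holds, so $T_{k-1}^{k-2}$ is increasing in $v_{k-1}$ while $E_k$ is slowed to $\subscr{u}{min}$, and inequality~\eqref{eq:n_coop_MD_2} then certifies that the configuration with $E_{k-1}$ at $\subscr{u}{max}$ dominates, so that $v_{(k-1)c}^*=\subscr{u}{max}$.

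I expect the only delicate point to be the justification that no interior speed $v_{k-1}\in(\subscr{u}{min},\subscr{u}{max})$ can outperform the chosen extreme, since $v_{k-1}$ influences the total not only through $T_{k-1}^{k-2}$ but also, via the cumulative sum $\sum_{i=1}^{k-1}T_i^{i-1}$, through the downstream term $T_k^{k-1}$. This coupling is precisely why the cooperation test is formulated as a comparison between the two extreme joint assignments rather than as an unconstrained optimization over $v_{k-1}$ alone, and it is resolved by the same monotonicity reasoning underlying Lemma~\ref{lem:coop_nE_MD}: monotonicity of both $T_{k-1}^{k-2}$ and $T_k^{k-1}$ in their arguments reduces the comparison to the extreme values, so that the single inequality~\eqref{eq:n_coop_MD} (respectively~\eqref{eq:n_coop_MD_2}) completely determines the optimal cooperative speed. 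Since Lemma~\ref{lem:coop_nE_MD} was itself established by induction on $k$, no new inductive step is needed here; the theorem follows directly once the winning extreme in each case is read off from the cooperation inequalities.
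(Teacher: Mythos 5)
Your proposal is correct and follows essentially the same route as the paper's own proof: the cooperation conditions of Lemma~\ref{lem:coop_nE_MD} place $E_{k-1}$ in the regime where it is intercepted below the previous intercept point (so condition~\eqref{eq:greed_n} fails and $T_{k-1}^{k-2}$ is decreasing in $v_{k-1}$ by Lemma~\ref{lem:mon_n_E}), forcing $\subscr{u}{min}$ in Case~1 and, symmetrically, $\subscr{u}{max}$ in Case~2, with the inequalities \eqref{eq:n_coop_MD} and \eqref{eq:n_coop_MD_2} certifying that the chosen configuration dominates. The only cosmetic differences are that the paper wraps this in a (largely vestigial) induction on $k$, which you correctly observe is unnecessary given that Lemma~\ref{lem:coop_nE_MD} already carries the induction, and that you explicitly flag the coupling of $v_{k-1}$ into the downstream term $T_k^{k-1}$, a point the paper's proof passes over silently.
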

\begin{proof}
Assume that the result holds for $k=\bar k-1$. The idea is to prove this result by induction by deriving this result for $E_{\bar k}$. Suppose case 1 from Lemma \ref{lem:coop_nE_MD} holds for $E_{\bar k}$ and $E_{\bar k+1}$. From Lemma \ref{lem:coop_nE_MD}, we know in order to cooperate with $E_{\bar k+1}$, $E_{\bar k}$ moves with a speed $v_{\bar k}$ such that $\Delta x_{\bar k}^{\bar{k}-1}<(y_{\bar{k}-1}-y_{\bar k}+(v_{\bar{k}-1}^*-v_{\bar{k}})\sum_{i=1}^{\bar{k}-1} T_{i}^{i-1})/v_{\bar{k}}$, i.e., $E_{\bar k}$ is intercepted below the pursuer's location. We also know from the same lemma that these conditions on cooperation ensure that the total time to intercept while cooperation is higher than total time to intercept when the evaders move greedy. Now, since $T_{\bar k}^{\bar k-1}$ is monotonic in $v_{\bar k}$, $v_{\bar kc}^*=\subscr{u}{min}$.
Similar steps can be followed for case 2.
\end{proof}

\begin{remark}[Sandwiched evader]\label{lem:sandwhich}
For some $i \in \{1,\dots, n\}$, if Lemma \ref{lem:coop_nE_MD} holds for evader $E_{i-1}$ and $E_i$ as well as $E_i$ and $E_{i+1}$, then evader $E_i$ moves greedy.
\end{remark}

\section{Fundamental Limit}\label{sec:fundamental}
In the previous sections, we considered that the pursuer followed a fixed strategy to capture all evaders. We now establish a fundamental upper bound, for \emph{a large number of evaders}, on the total time taken to intercept all evaders by the pursuer following \emph{any} strategy. We first provide some existing results that will be useful in establishing the bound.

Given a set of $m$ points, a \emph{Euclidean minimum Hamiltonian path} (EMHP) is the shortest path through $m$ points such that each point is visited exactly once. When the points are translating with some constant speed $v\in (0,1)$, then the shortest tour though the points is called \emph{Translational minimum Hamiltonian path} (TMHP) \cite{hammar1999approximation}.

\begin{Lemma}[Length of EMHP tour]\label{lem:Fews}
Given $m$ points in a $l\times h$ rectangle in the plane, where $h\in \mathbb{R}_{>0}$ and $l\in \mathbb{R}_{>0}$, there exists a path that starts
from a unit length edge of the rectangle, passes through each of
the $m$ points exactly once, and terminates on the opposite unit
length edge, with length upper bounded by $\sqrt{2lhm}+h+2.5$
\end{Lemma}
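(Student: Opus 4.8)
The lemma is an existence/upper-bound statement, so it suffices to exhibit one admissible path and bound its length; since the EMHP is by definition the shortest path with the required endpoints, its length is at most that of any path I construct. The plan is to use the classical strip (boustrophedon) construction. I would slice the $l\times h$ rectangle into $s$ parallel horizontal strips of equal height $w=h/s$, each spanning the full length $l$, oriented so that sweeping the strips carries the path monotonically from the starting edge to the opposite (unit-length) edge. Within each strip I sort its points by their longitudinal coordinate and connect them in order; a connector of length $w$ then carries the path vertically into the next strip, which is swept in the reverse direction, and so on, yielding a serpentine route that exits on the opposite edge. Writing $m_j$ for the number of points in strip $j$, so $\sum_j m_j=m$, this produces a valid Hamiltonian path between the two prescribed edges.

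Next I would split the length of this route into three parts and bound each separately. The \emph{within-strip} segments thread the $m_j$ points of a strip: each has transverse extent at most $w$ and its longitudinal extents sum to at most $l$ per strip. The $s-1$ \emph{inter-strip connectors} are vertical of length $w$ each, so they total at most $sw=h$, which is the source of the additive $h$ term. Finally, the two \emph{terminal segments} attaching the serpentine endpoints to the designated boundary edges, together with rounding $s$ up to an integer, account for the constant $2.5$. The leading behavior comes from balancing the transverse point-threading cost, which grows like $m w$, against the longitudinal sweeping cost, which grows like $sl=lh/w$; the choice $w=\sqrt{lh/m}$ equalizes them, making each of order $\sqrt{lhm}$.

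The main obstacle is obtaining the \emph{sharp} constant $\sqrt{2}$ in $\sqrt{2lhm}$ rather than the factor $2$ that the naive accounting gives. Estimating each within-strip segment by the sum of its longitudinal and transverse components (an $\ell^1$/Pythagorean split) double-counts the two directions and yields $2\sqrt{lhm}$; the improvement requires keeping the Euclidean lengths $\sqrt{\Delta_\parallel^2+\Delta_\perp^2}$ intact and exploiting that, at the balancing width $w=\sqrt{lh/m}$, the typical segment is near $45^\circ$, where the hypotenuse is a factor $\sqrt{2}$ below the $\ell^1$ value. Making this rigorous for an \emph{arbitrary} distribution of the $m_j$ across strips, rather than the uniform heuristic, is the delicate step, since an adversary can concentrate longitudinal jumps while keeping transverse steps maximal; this is exactly where I would invoke Few's classical length estimate (as used for the TMHP in \cite{hammar1999approximation}). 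I would then collect the three contributions and verify that the constructed path, and hence the EMHP, has length at most $\sqrt{2lhm}+h+2.5$.
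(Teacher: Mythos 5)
Your proposal is correct and takes essentially the same approach as the paper: the paper omits the proof, deferring to the strip-based (Few-type) construction of \cite{bopardikar2010dynamic} for a $1\times h$ rectangle, and your boustrophedon decomposition with balancing strip width $w=\sqrt{lh/m}$, inter-strip connectors totaling $h$, and terminal/rounding constant $2.5$ is exactly that argument. You also correctly isolate the one delicate point---obtaining the sharp constant $\sqrt{2}$ in $\sqrt{2lhm}$ rather than the factor $2$ from naive $\ell^1$ accounting---and appropriately defer it to Few's classical estimate, which is precisely what the cited proof does.
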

\begin{proof}
The proof is similar to the proof provided in~\cite{bopardikar2010dynamic} for a $1\times h$ rectangle and thus, has been omitted. 
\end{proof}
\medskip
To calculate the EMHP tour through translating points $s, s_1,\dots, s_f, f$ that move with speed $v$, the points are scaled by defining a conversion map $C_v:\mathbb{R}^2\to \mathbb{R}^2$ such that $C_v(x,y)=(\frac{x}{\sqrt{(1-v^2)}},\frac{y}{1-v^2})$ \cite{hammar1999approximation}.
\begin{Lemma}[Length of TMHP tour~\cite{hammar1999approximation}]\label{lem:TMHP_tour}
Let the initial and final point be denoted as $s=(x_s,y_s)$ and $f=(x_f,y_f)$ respectively, and $v \in (0,1)$ denote a constant speed of all evaders, then the length of the TMHP tour is $\frac{v(y_f-y_s)}{1-v^2} + \mathcal{L}_E(C_v(s),{C_v(s_1),\dots, C_v(s_f)},C_v(f))$
where, $\mathcal{L}_E(C_v(s),{C_v(s_1),\dots, C_v(s_f)},C_v(f))$ denotes the length of the EMHP starting with point $s$, moving through points ${s_1,\dots, s_f}$ and ending at point $f$.
\end{Lemma}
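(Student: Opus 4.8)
The plan is to reduce the moving-point (translational) problem to the static Euclidean problem by an explicit single-leg interception-time computation, and then to exploit a telescoping cancellation across all legs of the tour. First I would analyze one leg in isolation. Suppose the pursuer, travelling at unit speed, departs from a point whose \emph{initial} location is $(x_a,y_a)$ and must intercept a point whose initial location is $(x_b,y_b)$, both drifting upward at speed $v$. Letting $t$ be the duration of this leg, the relative displacement between the two points at the instant of interception is $(x_b-x_a,\; y_b-y_a+vt)$: the common upward drift acts identically on both endpoints, so the only surviving effect of the motion during the leg is the extra offset $vt$. Equating the pursuer's travelled distance $t$ to the magnitude of this displacement gives the self-consistent quadratic $t^2=(x_b-x_a)^2+(y_b-y_a+vt)^2$. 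Solving for the positive root (which one checks is always positive since $v<1$) and simplifying the discriminant, I would obtain
\begin{align*}
t &= \frac{v(y_b-y_a)}{1-v^2}+\sqrt{\frac{(x_b-x_a)^2}{1-v^2}+\frac{(y_b-y_a)^2}{(1-v^2)^2}}\\
&= \frac{v(y_b-y_a)}{1-v^2}+\norm{C_v(b)-C_v(a)}.
\end{align*}
The essential point is that $t$ depends only on the \emph{initial} coordinate differences and not on how much time has already elapsed earlier in the tour: any drift accrued before the leg begins shifts both points equally and drops out of their relative displacement.

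Next I would sum this per-leg identity over the ordered tour $s=q_0\to q_1\to\cdots\to q_N\to q_{N+1}=f$. The Euclidean contributions $\norm{C_v(q_i)-C_v(q_{i-1})}$ add up to exactly $\mathcal{L}_E(C_v(s),\{C_v(s_1),\dots,C_v(s_f)\},C_v(f))$, the length of the path through the transformed points. The drift contributions telescope, since $\sum_i \frac{v}{1-v^2}(y_{q_i}-y_{q_{i-1}})=\frac{v}{1-v^2}(y_f-y_s)$, the net $Y$-change from $s$ to $f$. This already establishes that, for \emph{any} fixed visiting order of the intermediate points, the total translational travel time equals the claimed expression.

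Finally I would observe that the telescoped drift term $\frac{v(y_f-y_s)}{1-v^2}$ is a constant that is independent of the ordering of the intermediate points, being fixed once the endpoints $s$ and $f$ are fixed. Consequently minimizing total travel time over orderings is equivalent to minimizing the Euclidean path length $\mathcal{L}_E$ in the transformed coordinates, so the optimal translational order coincides with the optimal Euclidean order under the map $C_v$, and the TMHP length equals the EMHP length in transformed space plus the constant drift term, which is precisely the stated formula. I expect the single-leg analysis to be the crux of the argument: the delicate claim is that the interception-time quadratic for each leg involves only initial positions despite interceptions occurring at different, a priori unknown, times. Once that self-consistency is established and the quadratic solved, the telescoping of the drift terms and the reduction of the minimization to an EMHP are routine.
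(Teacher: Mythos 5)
Your proposal is correct, but it is not comparable to a proof in the paper because the paper offers none: Lemma~\ref{lem:TMHP_tour} is imported by citation from \cite{hammar1999approximation}, with only the conversion map $C_v(x,y)=\bigl(\tfrac{x}{\sqrt{1-v^2}},\tfrac{y}{1-v^2}\bigr)$ quoted. What you have supplied is a self-contained derivation of that cited result, and it checks out. Your per-leg interception quadratic $t^2=(x_b-x_a)^2+(y_b-y_a+vt)^2$ has positive root
\begin{align*}
t=\frac{v(y_b-y_a)}{1-v^2}+\sqrt{\frac{(x_b-x_a)^2}{1-v^2}+\frac{(y_b-y_a)^2}{(1-v^2)^2}}
=\frac{v(y_b-y_a)}{1-v^2}+\norm{C_v(b)-C_v(a)},
\end{align*}
which is exactly the single-evader Euclidean intercept time the paper records in its Remark after Lemma~\ref{lem:mon_1_E}, so your crux step is consistent with the paper's own formulas. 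The two points you correctly identify as delicate are handled properly: the common drift $v\tau$ accrued before a leg cancels in the relative displacement (and the first leg, from the static start $s$, is the $\tau=0$ case), and the drift contributions telescope to $\tfrac{v}{1-v^2}(y_f-y_s)$, which is fixed once $s$ and $f$ are fixed. A bonus of your route: the observation that the telescoped term is order-independent simultaneously proves the paper's unproved follow-up assertion that the optimal visiting order for the TMHP coincides with the optimal order for the EMHP under $C_v$, since minimizing total intercept time over orders reduces to minimizing $\mathcal{L}_E$ in the transformed coordinates. The only implicit assumption worth stating is that, for a fixed order, the time-optimal pursuit of each point is the straight-line intercept (guaranteed feasible since $v<1$), which is standard and harmless here.
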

The optimal order followed by the vehicle in the TMHP solution is the same as the optimal order followed by the vehicle in the EMHP solution.\\

Denote $\subscr{n}{max}\in \mathbb{Z}_0^+$ as the total number of evaders that move with $\subscr{u}{max}$ and $\subscr{n}{min}=n-\subscr{n}{max}$ as the total number of evaders that move with $\subscr{u}{min}$. Let $\mathcal{\subscr{A}{max}}$ and $\mathcal{\subscr{A}{min}}$ denote the area of the smallest enclosing rectangular environment that the $\subscr{n}{max}$ and $\subscr{n}{min}$ evaders occupy initially. We assume that all of the evaders are initially located within a rectangular environment of area $\mathcal{A}$. The pursuer's strategy is to capture all the $\subscr{n}{max}$ evaders first, followed by capturing all the evaders moving with $\subscr{u}{min}$. This is because if the pursuer captures the $\subscr{n}{min}$ evaders first then naturally, the evaders moving with $\subscr{u}{max}$ will be further away from the pursuer.

Let $T_{\subscr{n}{max}}$ be the time taken by the vehicle to capture all of the $\subscr{n}{max}$ evaders and $T_{\subscr{n}{min}}^{\subscr{n}{max}}$ be the time taken to intercept the last evader that moves with $\subscr{u}{max}$ and the first evader that moves with $\subscr{u}{min}$ after capturing all of the $\subscr{n}{max}$ evaders respectively. Let $T_{\subscr{n}{min}}$ be the total time taken by the vehicle to capture all of the remaining $\subscr{n}{min}-1$ evaders. The next result characterizes an upper bound on the time taken by the pursuer to capture all evaders following any strategy.

\begin{theorem}[Upper bound on intercept time]\label{thm:upper_bound}
Let $\Delta y$ and $\Delta x$ be the difference between the initial $y$ and $x$-coordinate of the last evader captured moving with $\subscr{u}{max}$ and the first evader that is captured moving with $\subscr{u}{min}$. Then, from Lemma \ref{lem:Fews} and Lemma \ref{lem:TMHP_tour},
the total time taken by the pursuer to capture all evaders is $T = T_{\subscr{n}{max}} + T_{\subscr{n}{min}}^{\subscr{n}{max}} + T_{\subscr{n}{min}}$
where,
\begin{align*}
    T_{\subscr{n}{max}} = \sqrt{\frac{2\mathcal{\subscr{A}{max}}\subscr{n}{max}}{(1-\subscr{u}{max}^2)^{3/2}}},\text{ } T_{\subscr{n}{min}} = \sqrt{\frac{2\mathcal{\subscr{A}{min}}(\subscr{n}{min}-1)}{(1-\subscr{u}{min}^2)^{3/2}}}\\
    T_{\subscr{n}{min}}^{\subscr{n}{max}} = \frac{\subscr{u}{min}}{1-\subscr{u}{min}^2}(\Delta y + (\subscr{u}{min}-\subscr{u}{max})T_{\subscr{n}{max}}) + \\
    \sqrt{\frac{\Delta x^2}{1-\subscr{u}{min}^2}+\frac{(\Delta y + (\subscr{u}{min}-\subscr{u}{max})T_{\subscr{n}{max}})^2}{(1-\subscr{u}{min}^2)^2}}.
\end{align*}
Moreover, for large $n$,  $T$ is maximum for
\begin{align*}
     \subscr{n}{max}^* =\bigg\lfloor\frac{(\subscr{u}{min}-\subscr{u}{max})^2n}{(1-\subscr{u}{min}^2)^{\frac{1}{2}}(1-\subscr{u}{max}^2)^{\frac{3}{2}}+(\subscr{u}{min}-\subscr{u}{max})^2}\bigg\rceil,
\end{align*}
where $\lfloor x \rceil$ denotes the integer nearest to $x$.
\end{theorem}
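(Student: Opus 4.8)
The plan is to split the pursuer's trajectory into the three phases that define $T$, bound each separately, and then optimize over the partition size $\subscr{n}{max}$. The first phase intercepts all $\subscr{n}{max}$ evaders travelling at $\subscr{u}{max}$. Since these evaders share the common constant speed $\subscr{u}{max}$, the shortest such tour is a TMHP, so I would invoke Lemma \ref{lem:TMHP_tour} to write its length as $\frac{\subscr{u}{max}(y_f-y_s)}{1-\subscr{u}{max}^2}$ plus the length of the corresponding EMHP in the image of the conversion map $C_{\subscr{u}{max}}$. Under $C_{\subscr{u}{max}}$ the enclosing rectangle of area $\mathcal{\subscr{A}{max}}$ is stretched to area $\mathcal{\subscr{A}{max}}/(1-\subscr{u}{max}^2)^{3/2}$ (widths scale by $1/\sqrt{1-\subscr{u}{max}^2}$, heights by $1/(1-\subscr{u}{max}^2)$), so Lemma \ref{lem:Fews} bounds that EMHP by $\sqrt{2\mathcal{\subscr{A}{max}}\subscr{n}{max}/(1-\subscr{u}{max}^2)^{3/2}}$ up to additive terms that are constant or $O(1)$ in $\subscr{n}{max}$. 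Dividing by the unit pursuer speed and discarding the lower-order terms yields $T_{\subscr{n}{max}}$. The third phase is identical with $\subscr{u}{min}$ and $\subscr{n}{min}-1$ evaders and gives $T_{\subscr{n}{min}}$.

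For the transition phase $T_{\subscr{n}{min}}^{\subscr{n}{max}}$ I would use the Euclidean single-evader intercept time from the Remark following Lemma \ref{lem:mon_1_E}. At the instant phase one ends, the pursuer sits at the intercept point of the last $\subscr{u}{max}$ evader, and the first $\subscr{u}{min}$ evader becomes the new target. Because every evader moves only in $Y$, the horizontal separation of the two relevant evaders is still $\Delta x$, while their signed vertical separation has grown from $\Delta y$ to $\Delta y + (\subscr{u}{min}-\subscr{u}{max})T_{\subscr{n}{max}}$ over the elapsed time $T_{\subscr{n}{max}}$. Substituting $v=\subscr{u}{min}$, the horizontal separation $\Delta x$, and the signed vertical separation $\Delta y+(\subscr{u}{min}-\subscr{u}{max})T_{\subscr{n}{max}}$ into the Euclidean formula reproduces exactly the two terms of $T_{\subscr{n}{min}}^{\subscr{n}{max}}$, and summing the three phases gives $T$.

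It remains to maximize $T$ over $\subscr{n}{max}$ with $\subscr{n}{min}=n-\subscr{n}{max}$. For large $n$ I would argue that both evader subsets essentially occupy the whole environment, so $\mathcal{\subscr{A}{max}}\approx\mathcal{\subscr{A}{min}}\approx\mathcal{A}$, and retain only the partition-dependent growth. Since $\Delta x,\Delta y$ are $O(1)$, the square-root term of $T_{\subscr{n}{min}}^{\subscr{n}{max}}$ contributes a $\sqrt{\subscr{n}{max}}$ growth, $\frac{\subscr{u}{max}-\subscr{u}{min}}{1-\subscr{u}{min}^2}T_{\subscr{n}{max}} = \frac{\subscr{u}{max}-\subscr{u}{min}}{1-\subscr{u}{min}^2}\sqrt{\frac{2\mathcal{A}\subscr{n}{max}}{(1-\subscr{u}{max}^2)^{3/2}}}$, whereas $T_{\subscr{n}{min}}\approx\sqrt{\frac{2\mathcal{A}(n-\subscr{n}{max})}{(1-\subscr{u}{min}^2)^{3/2}}}$. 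Writing these as $a\sqrt{\subscr{n}{max}}$ and $b\sqrt{n-\subscr{n}{max}}$ and setting $dT/d\subscr{n}{max}=0$ gives $a^2(n-\subscr{n}{max})=b^2\subscr{n}{max}$, i.e. $\subscr{n}{max}^*=\frac{a^2}{a^2+b^2}n$. Computing $a^2/b^2=\frac{(\subscr{u}{min}-\subscr{u}{max})^2}{(1-\subscr{u}{min}^2)^{1/2}(1-\subscr{u}{max}^2)^{3/2}}$ (the common factor $2\mathcal{A}$ cancels) and simplifying yields the stated $\subscr{n}{max}^*$ after rounding to the nearest integer.

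The main obstacle is this last step: deciding precisely which $\sqrt{\subscr{n}{max}}$-order contributions to keep. The honest leading-order expansion of $T_{\subscr{n}{max}}+T_{\subscr{n}{min}}^{\subscr{n}{max}}$ also contains the bare phase-one term $T_{\subscr{n}{max}}$ and the linear part $\frac{\subscr{u}{min}}{1-\subscr{u}{min}^2}(\subscr{u}{min}-\subscr{u}{max})T_{\subscr{n}{max}}$; one has to justify the approximation that isolates the square-root contribution of the transition term, since otherwise the coefficient $a$ acquires the extra factor $\frac{1+\subscr{u}{max}}{1+\subscr{u}{min}}$ and the clean constant is lost. I would also verify the sign of the second derivative to confirm the critical point is a maximizer, and treat the rounding $\lfloor\cdot\rceil$ and the identifications $\mathcal{\subscr{A}{max}}\approx\mathcal{\subscr{A}{min}}\approx\mathcal{A}$ carefully, as these are exactly where the ``for large $n$'' hypothesis does the real work.
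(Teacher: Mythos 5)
Your proposal follows essentially the same route as the paper's proof: the same three-phase decomposition, with Lemma \ref{lem:Fews} applied to the $C_v$-scaled rectangle (area inflated by $(1-v^2)^{-3/2}$) for $T_{\subscr{n}{max}}$ and $T_{\subscr{n}{min}}$, the Euclidean intercept formula with drifted vertical separation $\Delta y + (\subscr{u}{min}-\subscr{u}{max})T_{\subscr{n}{max}}$ for the transition term, the worst-case identification $\mathcal{\subscr{A}{max}}=\mathcal{\subscr{A}{min}}=\mathcal{A}$, and the first-derivative-plus-concavity-plus-rounding argument for $\subscr{n}{max}^*$. The ``obstacle'' you flag is genuine and is precisely where the paper does the same thing by fiat: the bare and linear contributions you worry about combine to a derivative term with coefficient $\frac{1-\subscr{u}{min}\subscr{u}{max}}{1-\subscr{u}{min}^2}\frac{dT_{\subscr{n}{max}}}{d\subscr{n}{max}}$, which the paper discards under its explicitly \emph{assumed} (not derived) condition $1-\subscr{u}{min}\subscr{u}{max}\ll \frac{\Delta y+(\subscr{u}{min}-\subscr{u}{max})T_{\subscr{n}{max}}}{(1-\subscr{u}{min}^2)\sqrt{\cdot}}$, so your reconstruction, including its identified weak point, coincides with the published argument.
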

\begin{proof}
The outline of the proof is as follows. The expression for $T_{\subscr{n}{max}}$ and $T_{\subscr{n}{min}}$ follows directly from Lemma \ref{lem:Fews} and noting that $n$ is large. The expression for $T_{\subscr{n}{min}}^{\subscr{n}{max}}$ follows from \cite{hammar1999approximation}. Consider that the vehicle has just finished capturing all $\subscr{n}{max}$ evaders. Then, all the evaders moving with $\subscr{u}{min}$ would have translated $\subscr{u}{min}T_{\subscr{n}{max}}$ in the $y$ direction. Note that the area $\mathcal{\subscr{A}{min}}$ will remain the same as it was initially. Since $T_{\subscr{n}{max}}$ is large for large $\subscr{n}{max}$, the distance between the vehicle after capturing the last evader moving with $\subscr{u}{max}$ and the first evader moving with $\subscr{u}{min}$ will be large. Furthermore, since $\subscr{u}{max}>\subscr{u}{min}$ and $T_{\subscr{n}{max}}$ is large, the pursuer will always be above all evaders moving with $\subscr{u}{min}$ after capturing $\subscr{n}{max}$ evaders and thus, we get the expression for $T_{\subscr{n}{min}}^{\subscr{n}{max}}$. The evaders can select $\subscr{n}{max}$ such that the total time $T$ is maximized. Mathematically, $\subscr{n}{max}^* = \operatorname*{arg\,max}_{\subscr{n}{max}}  T(\subscr{n}{max})$.
If we relax the requirement of $\subscr{n}{max}$ to be a real number then the function $T:\mathbb{R}\to \mathbb{R}$ is concave with global maximum in the domain $[0,n]$. This follows as $\frac{dT^2}{d\subscr{n}{max}^2}<0$. Thus, to find the maximizer, we use the first derivative test, i.e., $\frac{dT}{d\subscr{n}{max}}=0$ and then find the closest integer value that maximizes $T$.
By taking the derivative of $T$ with respect to $\subscr{n}{max}$, we get $\frac{dT_{\subscr{n}{max}}}{d\subscr{n}{max}}+\frac{dT_{\subscr{n}{min}}^{\subscr{n}{max}}}{d\subscr{n}{max}}+\frac{dT_{\subscr{n}{min}}}{d\subscr{n}{max}}$, where,
\begin{align*}
   \frac{dT_{\subscr{n}{max}}}{d\subscr{n}{max}}= \frac{ \sqrt{\mathcal{\subscr{A}{max}}}}{(1-\subscr{u}{max}^2)^{3/4}\sqrt{2\subscr{n}{max}}},\\
   \frac{dT_{\subscr{n}{min}}}{d\subscr{n}{max}} = -\frac{ \sqrt{\mathcal{\subscr{A}{min}}}}{(1-\subscr{u}{min}^2)^{3/4}\sqrt{2(n-\subscr{n}{max}-1)}},\\
    \frac{dT_{\subscr{n}{min}}^{\subscr{n}{max}}}{d\subscr{n}{max}} = \frac{\subscr{u}{min}}{1-\subscr{u}{min}^2}(\subscr{u}{min}-\subscr{u}{max})\frac{dT_{\subscr{n}{max}}}{d\subscr{n}{max}}+\\
   \frac{(\Delta y+(\subscr{u}{min}-\subscr{u}{max})T_{\subscr{n}{max}})(\subscr{u}{min}-\subscr{u}{max})\frac{dT_{\subscr{n}{max}}}{d\subscr{n}{max}}}{(1-\subscr{u}{min}^2)^2\sqrt{\frac{\Delta x^2}{1-\subscr{u}{min}^2}+\frac{(\Delta y+(\subscr{u}{min}-\subscr{u}{max})T_{\subscr{n}{max}})^2}{(1-\subscr{u}{min}^2)^2}}}.
\end{align*}
In the worst case, the evader initial locations will cover the area $\mathcal{A}$ completely, i.e., $\mathcal{\subscr{A}{max}}=\mathcal{\subscr{A}{min}}=\mathcal{A}$.
Adding and equating to zero, and noting that $T_{\subscr{n}{max}}$ is large, $\Delta y\ll(\subscr{u}{min}-\subscr{u}{max})T_{\subscr{n}{max}}$ and $1-\subscr{u}{min}\subscr{u}{max}\ll\frac{\Delta y+(\subscr{u}{min}-\subscr{u}{max})T_{\subscr{n}{max}}}{(1-\subscr{u}{min}^2)\sqrt{\frac{\Delta x^2}{1-\subscr{u}{min}^2}+\frac{(\Delta y^2+(\subscr{u}{min}-\subscr{u}{max})T_{\subscr{n}{max}})^2}{(1-\subscr{u}{min}^2)^2}}}$
yields
\begin{align*}
   \subscr{n}{max}^*= \frac{2\Delta u^4\mathcal{A}n-(1-\subscr{u}{min}^2)^{3/2}\Delta x^2(1-\subscr{u}{max}^2)^3}{2\Delta u^2\mathcal{A}((1-\subscr{u}{min}^2)^{\frac{1}{2}}(1-\subscr{u}{max}^2)^{\frac{3}{2}}+\Delta u^2)},
\end{align*}
where $\Delta u=\subscr{u}{min}-\subscr{u}{max}$. Since, $2\Delta u^4\mathcal{A}n\gg(1-\subscr{u}{min}^2)^{\frac{3}{2}}\Delta x^2(1-\subscr{u}{max})^3$, we get the result. Furthermore, since $\frac{d^2T}{d\subscr{n}{max}^2}<0$ at the critical point $\subscr{n}{max}^*$, $\subscr{n}{max}^*$ is indeed the point of maximum. This concludes the proof.
\end{proof}

\section{Simulation Results}\label{sec:Sims}
We first present the numerical results for Algorithm \ref{algo:gesec}. 
We compare the mean of the total time to intercept all evaders using Algorithm \ref{algo:gesec} to the mean of the total time to intercept all evaders by randomly sampling over the evader speeds of either $\subscr{u}{min}$ or $\subscr{u}{max}$ (see Figure~\ref{fig:sims}). For each value of $n$, we randomly generate the initial locations of the evaders and the pursuer and we consider 50 Monte Carlo trials. To select the best evader speeds, we choose $10nln(2/\delta)$ samples uniformly randomly over the set, which guarantees that the violation probability is less than a small quantity $\delta$~\cite{alamo2010sample}, where $\delta = 0.1$. We compute the maximum over the samples and then report the mean value in Figure~\ref{fig:sims}. We observe that Algorithm \ref{algo:gesec} outperforms random sampling.\\ 

Figure \ref{fig:upperbound} shows a comparison when $\subscr{n}{max}$ is selected uniformly randomly to the upper bound obtained by $\subscr{n}{max}^*$ for given initial locations. To obtain the EMHP tour required for the time to intercept evaders, the {\ttfamily linkern}\footnote{The TSP solver {\ttfamily linkern} is
  freely available for academic research use at {\ttfamily
    http://www.math.uwaterloo.ca/tsp/concorde/}.} solver was used. We consider 50 Monte Carlo trials for each value of $n$ and report the mean and standard deviation. It is observed that the total time to intercept all evaders by randomly selecting $\subscr{n}{max}$ is well below the upper bound obtained from $\subscr{n}{max}^*$. Thus, by performing an additional optimization to select $\subscr{n}{max}$ the evaders can reach the upper bound on time to intercept. This means that a strategy that only depends on $\subscr{n}{max}$ may be sub-optimal for the evaders.
 \begin{figure}[t]
    \centering
    \includegraphics[scale=0.55]{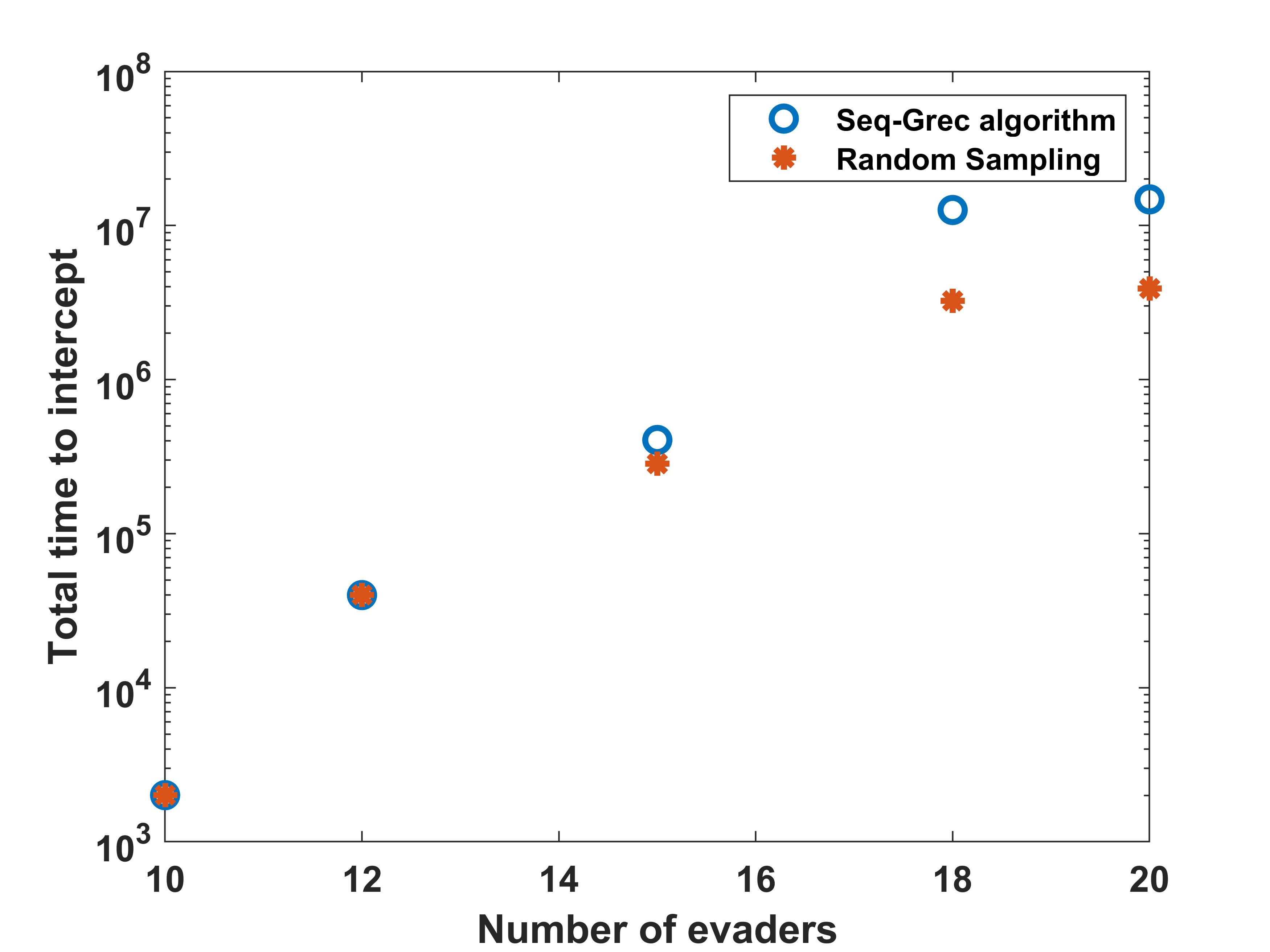}
    \caption{\small{Comparison of Seq-Grec with Random Sampling. The blue circles represent the mean of the total time to intercept of Seq-Grec Algorithm. The orange star represents the mean over the samples of Random Sampling}}
    \label{fig:sims}
\end{figure}
 \begin{figure}[t]
    \centering
    \includegraphics[scale=0.55]{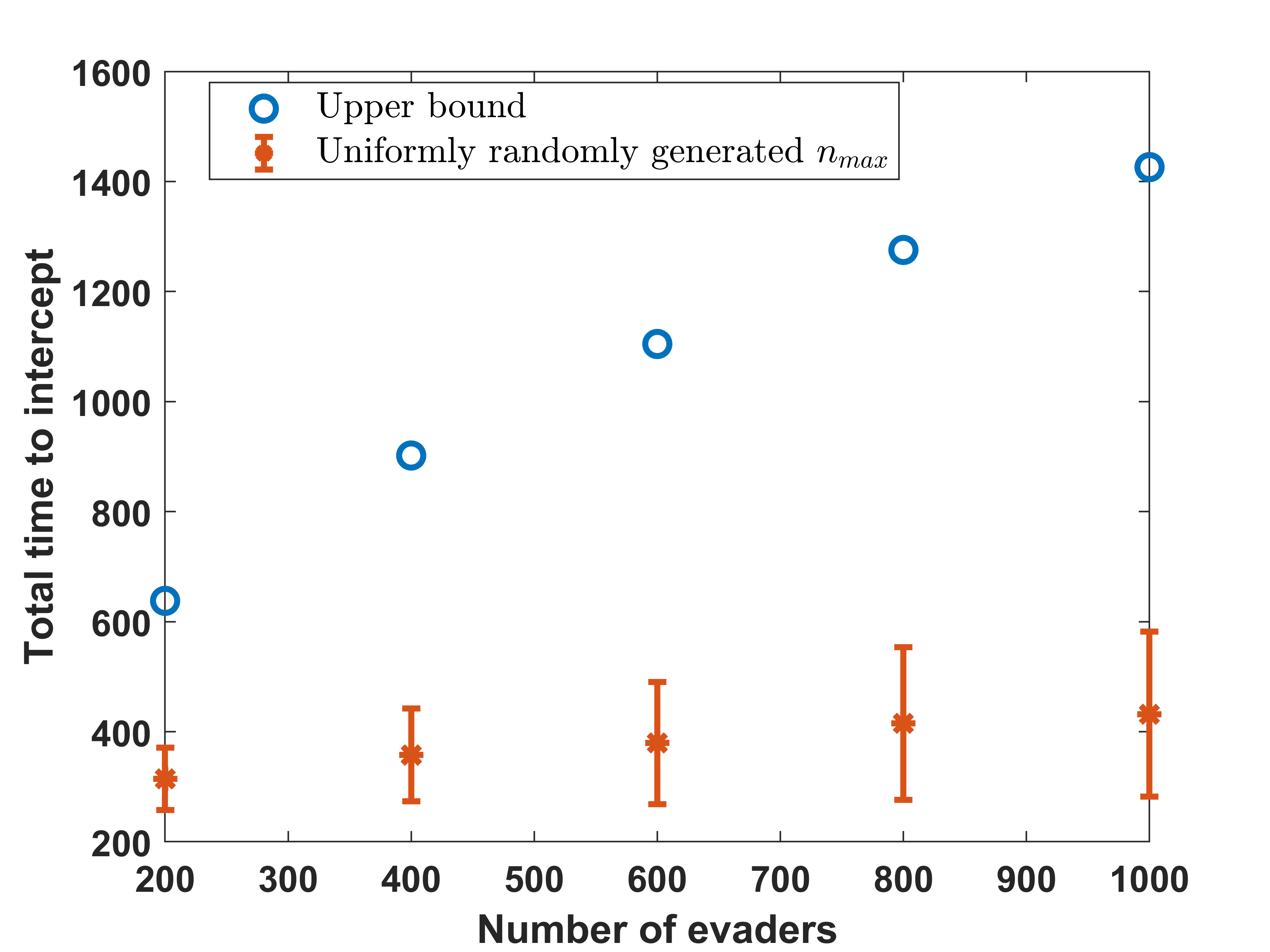}
    \caption{\small{Comparison of the total time to intercept when $\subscr{n}{max}$ is randomly uniformly selected to that of the upper bound obtained from $\subscr{n}{max}^*$.}}
    \label{fig:upperbound}
\end{figure}
\section{Conclusions and Future Work}\label{sec:Conc}
An optimal evasion problem between single a pursuer and multiple evaders was addressed. It is shown that by enforcing cooperation among evaders, they are able to maximize the total interception time. Conditions where cooperation is optimal are also presented which are crucial to implement the cooperative strategies. An upper bound on the total time to intercept all evaders is also presented.

In subsequent work, a generalized setup of multiple pursuers and evaders will be considered. Constant factor approximations for both, the evaders and the pursuers will also be addressed. Identifying which evaders should move with $\subscr{u}{max}$ is another possible extension.

\bibliographystyle{IEEEtran}
\bibliography{main}

\begin{thebibliography}{10}
\providecommand{\url}[1]{#1}
\csname url@samestyle\endcsname
\providecommand{\newblock}{\relax}
\providecommand{\bibinfo}[2]{#2}
\providecommand{\BIBentrySTDinterwordspacing}{\spaceskip=0pt\relax}
\providecommand{\BIBentryALTinterwordstretchfactor}{4}
\providecommand{\BIBentryALTinterwordspacing}{\spaceskip=\fontdimen2\font plus
\BIBentryALTinterwordstretchfactor\fontdimen3\font minus
  \fontdimen4\font\relax}
\providecommand{\BIBforeignlanguage}[2]{{%
\expandafter\ifx\csname l@#1\endcsname\relax
\typeout{** WARNING: IEEEtran.bst: No hyphenation pattern has been}%
\typeout{** loaded for the language `#1'. Using the pattern for}%
\typeout{** the default language instead.}%
\else
\language=\csname l@#1\endcsname
\fi
#2}}
\providecommand{\BIBdecl}{\relax}
\BIBdecl

\bibitem{ragesh2014analysis}
R.~Ragesh, A.~Ratnoo, and D.~Ghose, ``Analysis of evader survivability
  enhancement by decoy deployment,'' in \emph{2014 American Control
  Conference}.\hskip 1em plus 0.5em minus 0.4em\relax IEEE, 2014, pp.
  4735--4740.

\bibitem{isaacs1999differential}
R.~Isaacs, \emph{Differential games: a mathematical theory with applications to
  warfare and pursuit, control and optimization}.\hskip 1em plus 0.5em minus
  0.4em\relax Courier Corporation, 1999.

\bibitem{makkapati2019optimal}
V.~R. Makkapati and P.~Tsiotras, ``Optimal evading strategies and task
  allocation in multi-player pursuit--evasion problems,'' \emph{Dynamic Games
  and Applications}, vol.~9, no.~4, pp. 1168--1187, 2019.

\bibitem{girard2015proportional}
A.~R. Girard and P.~T. Kabamba, ``Proportional navigation: optimal homing and
  optimal evasion,'' \emph{SIAM Review}, vol.~57, no.~4, pp. 611--624, 2015.

\bibitem{selvakumar2016evasion}
J.~Selvakumar and E.~Bakolas, ``Evasion from a group of pursuers with a
  prescribed target set for the evader,'' in \emph{2016 American Control
  Conference (ACC)}.\hskip 1em plus 0.5em minus 0.4em\relax IEEE, 2016, pp.
  155--160.

\bibitem{fuchs2010cooperative}
Z.~E. Fuchs, P.~P. Khargonekar, and J.~Evers, ``Cooperative defense within a
  single-pursuer, two-evader pursuit evasion differential game,'' in \emph{49th
  IEEE Conference on Decision and Control (CDC)}.\hskip 1em plus 0.5em minus
  0.4em\relax IEEE, 2010, pp. 3091--3097.

\bibitem{zemskov1997construction}
K.~Zemskov and A.~Pashkow, ``Construction of optimal position strategies in a
  differential pursuit-evasion game with one pursuer and two evaders,''
  \emph{Journal of applied mathematics and mechanics}, vol.~61, no.~3, pp.
  391--399, 1997.

\bibitem{oyler2016pursuit}
D.~W. Oyler, P.~T. Kabamba, and A.~R. Girard, ``Pursuit--evasion games in the
  presence of obstacles,'' \emph{Automatica}, vol.~65, pp. 1--11, 2016.

\bibitem{garcia2014cooperative}
E.~Garcia, D.~W. Casbeer, K.~Pham, and M.~Pachter, ``Cooperative aircraft
  defense from an attacking missile,'' in \emph{53rd IEEE Conference on
  Decision and Control}.\hskip 1em plus 0.5em minus 0.4em\relax IEEE, 2014, pp.
  2926--2931.

\bibitem{chikrii1987pursuit}
A.~A. Chikrii and S.~Kalashnikova, ``Pursuit of a group of evaders by a single
  controlled object,'' \emph{Cybernetics and Systems Analysis}, vol.~23, no.~4,
  pp. 437--445, 1987.

\bibitem{shevchenko2008guaranteed}
I.~Shevchenko, ``Guaranteed approach with the farthest of the runaways,''
  \emph{Automation and Remote Control}, vol.~69, no.~5, pp. 828--844, 2008.

\bibitem{belousov2010solving}
A.~Belousov, Y.~I. Berdyshev, A.~Chentsov, and A.~Chikrii, ``Solving the
  dynamic traveling salesman game problem,'' \emph{Cybernetics and Systems
  Analysis}, vol.~46, no.~5, pp. 718--723, 2010.

\bibitem{liu2013evasion}
S.-Y. Liu, Z.~Zhou, C.~Tomlin, and K.~Hedrick, ``Evasion as a team against a
  faster pursuer,'' in \emph{2013 American Control Conference}.\hskip 1em plus
  0.5em minus 0.4em\relax IEEE, 2013, pp. 5368--5373.

\bibitem{Scott2018OptimalES}
W.~L. Scott and N.~E. Leonard, ``Optimal evasive strategies for multiple
  interacting agents with motion constraints,'' \emph{Automatica}, vol.~94, pp.
  26--34, 2018.

\bibitem{krishnamoorthy2013optimal}
K.~Krishnamoorthy, S.~Darbha, P.~P. Khargonekar, D.~Casbeer, P.~Chandler, and
  M.~Pachter, ``Optimal minimax pursuit evasion on a \text{Manhattan} grid,''
  in \emph{2013 American Control Conference}.\hskip 1em plus 0.5em minus
  0.4em\relax IEEE, 2013, pp. 3421--3428.

\bibitem{hammar1999approximation}
M.~Hammar and B.~J. Nilsson, ``Approximation results for kinetic variants of
  {TSP},'' in \emph{International Colloquium on Automata, Languages, and
  Programming}.\hskip 1em plus 0.5em minus 0.4em\relax Springer, 1999, pp.
  392--401.

\bibitem{bopardikar2010dynamic}
S.~D. Bopardikar, S.~L. Smith, F.~Bullo, and J.~P. Hespanha, ``Dynamic vehicle
  routing for translating demands: Stability analysis and receding-horizon
  policies,'' \emph{IEEE Transactions on Automatic Control}, vol.~55, no.~11,
  pp. 2554--2569, 2010.

\bibitem{alamo2010sample}
T.~Alamo, R.~Tempo, and A.~Luque, ``On the sample complexity of randomized
  approaches to the analysis and design under uncertainty,'' in
  \emph{Proceedings of the 2010 American Control Conference}.\hskip 1em plus
  0.5em minus 0.4em\relax IEEE, 2010, pp. 4671--4676.

\end{thebibliography}
\end{document}